\numberwithin{equation}{section}
\newtheorem {theorem}[equation]{Theorem}
\newtheorem*{theorem*}{Theorem}
\newtheorem {lemma}[equation]{Lemma}
\newtheorem {corollary} [equation]      {Corollary}
\newtheorem*{corollary*}{Corollary}
\newtheorem {proposition}[equation]     {Proposition}
\theoremstyle{definition}
\newtheorem {definition}[equation]{Definition}
\theoremstyle{remark}
\newtheorem {remark}[equation]{Remark}
\newtheorem {example}[equation]{Example}
\newcommand{\init}{\mathrm{in}}
\newcommand{\bigslant}[2]{{\raisebox{.2em}{$#1$}\left/\raisebox{-.2em}{$#2$}\right.}}
\author[S. Da Silva]{Sergio Da Silva}
\author[A. Stewart]{Aniya Stewart}
\address[Sergio Da Silva]{
    Dept.\ of Mathematics and Economics, 
    Virginia State University, 
    1 Hayden Drive,
    Petersburg, Virginia 23806, USA}
\email{sdasilva@vsu.edu, smd322@cornell.edu}
\address[Aniya Stewart]{
 Dept.\ of Mathematics and Economics, 
    Virginia State University, 
    1 Hayden Drive,
    Petersburg, Virginia 23806, USA}
\email{aste9039@students.vsu.edu, stewart.aniya@gmail.com}
\title[Quantum-Resistant Cryptography via Universal Gr\"obner Bases]{Quantum-Resistant Cryptography via Universal Gr\"obner Bases}
\date{\today}
\keywords{post-quantum cryptography, universal Gr\"obner bases, toric ideals of graphs} 
\subjclass[2000]{Primary: 94A60, 13P10; Secondary: 05E40, 14M25}
\begin{document}

\begin{abstract}

In this article, we explore the use of universal Gr\"obner bases in public-key cryptography by proposing a key establishment protocol that is resistant to quantum attacks. By utilizing a universal Gr\"obner basis $\mathcal{U}_I$ of a polynomial ideal $I$ as a private key, this protocol leverages the computational disparity between generating the universal Gr\"obner basis needed for decryption compared with the single Gr\"obner basis used for encryption. The security of the system lies in the difficulty of directly computing the Gr\"obner fan of $I$ required to construct  $\mathcal{U}_I$. We provide an analysis of the security of the protocol and the complexity of its various parameters. Additionally, we provide efficient ways to recursively generate $\mathcal{U}_I$ for toric ideals of graphs with techniques which are also of independent interest to the study of these ideals. 

\end{abstract}

\maketitle

\section{Introduction} \label{sec: intro}

Cryptographic systems often rely on the computational difficulty of solving particular mathematical problems. Quantum computing is a rapidly growing industry \cite{SP} with reports of capable quantum systems being available by as soon as 2030, making post-quantum cryptography especially relevant while also threatening the security of traditional cryptographic methods \cite{WHHWJDRM}. For example, the commonly used RSA cryptosystem, which relies on the difficulty of factoring the product of two (secret) large prime numbers \cite{LLMA}, would no longer remain secure using Shor’s algorithm on a quantum computer \cite{WWJDRM}. As the vulnerabilities facing  cryptographic systems becomes a reality, it is necessary to explore other approaches and techniques that might prove more useful in resisting quantum attacks.

One promising area of exploration is with primitives that utilize algebraic or combinatorial constructions, especially in the context of lattice-based cryptography. Many algebraic constructions utilize computational aspects of ideals in polynomial rings \cite{CCT,CMT}, making  Gr\"obner bases a natural component in their implementation \cite{E}. Gr\"obner bases are specific generators of a polynomial ideal that allow many algebro-geometric properties to be computed efficiently from an associated monomial ideal. Past attempts to use Gr\"obner bases in public-key cryptography have failed, such as with Barkee cryptosystems \cite{BCEMR, BCMV}. The main obstacle to these approaches is that a single Gr\"obner basis is generally too easy to compute to realistically be used to secure a system. A universal Gr\"obner basis on the other hand is difficult to compute, and involves the  computation of high-dimensional lattice structures like the Gr\"obner fan and state polytope of a polynomial ideal. Our approach is to have one party use a universal Gr\"obner basis to produce a private list of keys while also having a public mechanism for another party to generate one key from that list.

Let $A$ and $B$ be two parties who have not previously communicated to share a common encryption key. To establish the protocol $\mathcal{P}$, Party $A$ starts with an ideal $I \subset \mathbb{K}[x_1,\ldots, x_n]$, a universal Gr\"obner basis $\mathcal{U}_I$ of $I$, and some generating set $\mathcal{R}_I$ of $I$. Then $\mathcal{R}_{I}$, $\mathbb{K}$, and the number of variables is made public, together with information about the encryption scheme needed to create the ciphertext, including two hash functions $\eta$ and $\tau$. The protocol $\mathcal{P}$ can be defined without the use of $\tau$, in which case we set $\tau=\emptyset$.

Now Party $B$ can send an encrypted message to $A$ using the information publicly provided by $A$. Party $B$ starts by choosing a random monomial order $<_B$ of $\mathbb{K}[x_1,\ldots, x_n]$ and then computes the initial ideal $\init_{<_B}\langle \mathcal{R}_I\rangle$. Both $<_B$ and the initial ideal are kept private. By Dickson's lemma, there is a unique minimal generating set of $\init_{<_B}\langle \mathcal{R}_I\rangle$, and based on a public hash function $\eta$ provided by $A$, Party $B$ converts this set into a binary sequence which will serve as the encryption key $K_B$. Using the predetermined encryption scheme $\mathcal{E}$ provided, $B$ encrypts the message into a ciphertext. If $\tau=\emptyset$, then only this ciphertext is sent back to $A$. Otherwise, $\tau(K_B)$ is also sent back to $A$. The protocol can be described using the following schematic:

\begin{center}
\begin{tikzpicture}[scale=0.9]
  
  % Alice
  \node[draw] (Alice) at (-2,0) {Party $A$}; 
  \draw[thick] (Alice) -- ++(0, -3);
    
  % Calculations of Alice
  \node[draw=none,fill=none,anchor=east] (node) at ($(Alice) + (0,-0.8)$) {Private: $\mathcal{U}_I$}; 
    \node[draw=none,fill=none,anchor=east] (node) at ($(Alice) + (0,-1.5)$) {Public: $(n,\mathbb{K},\mathcal{R_I},\eta,\mathcal{E},\tau)$};
    \node[draw=none,fill=none,anchor=east] (node) at ($(Alice) + (0,-2.5)$) {Decrypt with $\mathcal{U}_I$};

  % Bob
  \node[draw] (Bob) at (2,0) {Party $B$}; 
  \draw[thick] (Bob) -- ++(0, -3);
   
  % Calculations of Bob
  \node[draw=none,fill=none,anchor=west] (path) at ($(Bob) + (0,-0.8)$) {Private: $<_B$, $K_B=\eta(\init_{<_B}\langle\mathcal{R}_{I}\rangle)$};
   \node[draw=none,fill=none,anchor=west] (path) at ($(Bob) + (0,-1.5)$) {Encrypt: plaintext $X$ via $K_B$ and $\mathcal{E}$};
    \node[draw=none,fill=none,anchor=west] (path) at ($(Bob) + (0,-2.5)$) {Send ciphertext $\mathcal{E}(K_B,X)$ and $\tau(K_B)$};

  % Messages
  \draw[->,thick] ($(Alice)+(0,-1)$) -- ($(Bob)+(0,-1)$) node [pos=0.5,above,font=\footnotesize] {};
    \draw[->,thick] ($(Bob)+(0,-2.5)$) -- ($(Alice)+(0,-2.5)$) node [pos=0.4,below,font=\footnotesize] {};
\end{tikzpicture}
\end{center}

First suppose that $\tau=\emptyset$. If an attacker were to intercept the ciphertext, they wouldn't know what parameters $B$ chose to produce the ciphertext, so a brute-force attack would be intractable. For instance, with RSA public-key cryptography, an attacker at least knows the product $m=pq$ (which is public) and could try to find a brute-force factorization of $m$. With our setup, if an RSA encryption algorithm $\mathcal{E}$ were used to create a ciphertext, $m$ would remain hidden, so an attacker wouldn't even know what number to attempt to factor.

This added security comes at a cost. Since Party $A$ will also not know which key Party $B$ chose, the only option is to try all possible keys to decrypt the message. However, $A$ has the universal Gr\"obner basis $\mathcal{U}_I$, and therefore possesses the list of all possible initial ideals of $I$, and hence has the list of all possible encryption keys that $B$ could have generated. The list that Party $A$ needs to exhaustively search to decrypt the message is reasonable compared to the intractable list that an attacker would need to try. The system also relies on $\mathcal{U}_I$ being extremely difficult to compute directly without prior knowledge of any symmetries used to construct $I$.

On the other hand, if $\tau\neq \emptyset$, then $A$ could decrypt the message without iterating through the list of keys since the image of each key under $\tau$ would already be known and could be compared with the value $\tau(K_B)$ provided by $B$. However, this reduces the overall security of the system since an attacker intercepting the message would also have knowledge of $\tau(K_B)$, and could try attacking $\tau$ instead of computing $\mathcal{U}_I$ directly. We summarize the analysis of the protocol $\mathcal{P}$ presented in this article in the following theorem.

\begin{theorem*}
Let $\mathcal{P}= (n,\mathbb{K}, \mathcal{U}_I, \mathcal{R}_I, <_B, \eta, \mathcal{E},\tau)$ be a protocol as in Definition \ref{def: par}. Then the following statements about the complexity and security of the cryptosystem hold:

\begin{itemize}
    \item Party $A$ can (privately) construct $\mathcal{U}_I$ in polynomial time by using Theorem \ref{thm: construct U_I}. Given $\mathcal{U}_I$, the set $\mathcal{R}_I$ can be computed in linear time by Theorem \ref{thm: trim ideal}.
    \item Party $B$ can send a message to $A$ by computing a single Gr\"obner basis for $I$, the complexity of which is summarized in Theorem \ref{thm: GB complexity}.
    \item The amount of time that $A$ requires to decrypt the ciphertext when $\tau=\emptyset$ is summarized in Proposition \ref{prop: decrypt}.
    \item An attacker without any trapdoor knowledge about how $I$ was constructed would need to compute the Gr\"obner fan of $I$ directly, which is NP-hard with complexity described in Theorem \ref{thm: Gfan}.
    \item By Theorem \ref{thm: secure}, if lattice-based primitives secured by the Shortest Vector Problem (SVP) are quantum-resistant, then so is the  protocol $\mathcal{P}$ when $\tau=\emptyset$.
\end{itemize}
\end{theorem*}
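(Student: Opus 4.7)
The plan is to treat this theorem as an organizational statement: each bullet follows directly from the later result it cites, so the ``proof'' consists of verifying that the cited theorems deliver the claimed conclusions and that, taken together, they describe a coherent protocol.

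For the first four bullets the work is largely bookkeeping. Bullet one I would dispatch by invoking Theorem~\ref{thm: construct U_I} and counting the steps of its recursive construction of $\mathcal{U}_I$ for toric graph ideals; a polynomial bound should follow from the combinatorial control on primitive binomials (equivalently on circuits in the underlying graphic matroid). Bullet two is an immediate consequence of Theorem~\ref{thm: trim ideal}: a single linear scan through $\mathcal{U}_I$ extracts $\mathcal{R}_I$. Bullet three is a direct appeal to Theorem~\ref{thm: GB complexity}, the standard worst-case bound on Buchberger's algorithm for a single monomial order. Bullet four, Proposition~\ref{prop: decrypt}, amounts to an exhaustive match against the list of initial ideals enumerated from $\mathcal{U}_I$, so the running time is bounded by $|\mathcal{U}_I|$ times the cost of one decryption attempt.

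Bullet five is the first genuinely substantive piece. I would argue that recovering $\mathcal{U}_I$ without the trapdoor data describing how $I$ was constructed is at least as hard as enumerating the Gr\"obner fan of $I$, and then quote Theorem~\ref{thm: Gfan} for the NP-hardness and precise complexity. The care here is to verify that no shortcut specific to the graph-toric setting is available to an outsider who does not know the underlying graph $G$.

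The main obstacle is bullet six, the quantum-resistance reduction of Theorem~\ref{thm: secure}. The strategy is to convert any efficient quantum adversary against $\mathcal{P}$ (with $\tau=\emptyset$) into an efficient quantum solver for SVP on a lattice canonically attached to $I$, most naturally the lattice spanned by the exponent vectors of the toric generators. The delicate step is to exhibit a reduction in both directions: an SVP instance must embed into an initial-ideal-recovery instance so that a short lattice vector yields a valid candidate key $K_B$ with non-negligible advantage, and conversely any adversary succeeding against $\mathcal{P}$ must be decodable into a short vector. Once this reduction is in place, the contrapositive yields quantum resistance of $\mathcal{P}$ from the assumed quantum resistance of SVP-based primitives. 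I would model the argument on existing lattice-based post-quantum reductions, paying special attention to showing that the public data $(\mathcal{R}_I,\eta,\mathcal{E})$ and the absence of $\tau$ leak nothing about $\mathcal{U}_I$ that would bypass SVP hardness.
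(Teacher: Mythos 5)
Your organizational reading is correct: the theorem is a roadmap, and each bullet is discharged by the later result it cites, so most of the work really is bookkeeping, exactly as you say. Two small inaccuracies in the bookkeeping are worth flagging. First, primitive binomials of a graph toric ideal are \emph{not} the same as circuits of the graphic matroid (the circuits form a proper subset in general), so the polynomial-time construction of $\mathcal{U}_I$ rests on the recursive graph operations of Theorem~\ref{thm: construct U_I} (vertex gluing of bipartite graphs, even-cycle gluing, degree-two star subdivision, star contraction), not on matroid circuit enumeration. Second, the decryption time in Proposition~\ref{prop: decrypt} is $N\cdot C(\mathcal{E})$ where $N$ is the number of Gr\"obner regions of $GFan(I)$ (equivalently the number of distinct initial ideals, hence candidate keys), which is not the same quantity as $|\mathcal{U}_I|$; the latter counts the union of all reduced Gr\"obner bases and can be quite different.

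The substantive divergence is in the quantum-resistance bullet, and here your plan does not match the paper and would likely stall. You propose a full cryptographic reduction: convert any efficient adversary against $\mathcal{P}$ into an SVP solver, demanding embeddings in both directions, on a lattice spanned by the exponent vectors of the toric generators. That is markedly stronger than what Theorem~\ref{thm: secure} actually establishes. The paper's argument is one-directional and runs through the Gr\"obner fan rather than through the protocol itself: given $GFan(I)$, one reads off lattice generators for each Gr\"obner cone, assembles a lattice $\mathcal{C}$ from appropriately scaled generating vectors, and observes that the shortest vector in this lattice coincides with the closest-to-origin interior lattice point among all Gr\"obner cones, a quantity that is computable in polynomial time once $GFan(I)$ is known. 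The conclusion is only that computing $GFan(I)$ is at least as hard as solving \emph{a specific} SVP instance; combined with the informal premise that a trapdoor-less attacker ``would need'' the fan, the paper takes this as justification of quantum resistance. Your proposed converse embedding (that any successful adversary against $\mathcal{P}$ can be decoded into a short lattice vector) is not attempted in the paper, and there is no evident construction that would produce it. If you take up this bullet, the right move is to show how $GFan(I)$ yields an SVP instance and its solution, while noting explicitly that the paper does not claim, and does not prove, that every attack on $\mathcal{P}$ solves an SVP.
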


A more detailed discussion of the steps involved to initialize the protocol $\mathcal{P}$ is presented in Section \ref{sec: protocol}. We then consider practical complexity and security issues in Section \ref{sec: security} and subsequently describe how to efficiently construct $\mathcal{U}_I$ in Section \ref{sec: initial} using the toric ideal $I_G$ of a graph $G$. This ideal can be generated by binomials corresponding to primitive closed even walks of $G$, which incidentally also define a universal Gr\"obner basis of $I_G$. Using four graph operations, one can recursively generate large graphs for which $\mathcal{U}_I$ is computable and contains enough primitive closed even walks to ensure the security of $\mathcal{P}$.

It is worth noting that the graph constructions presented in Section \ref{sec: initial}, together with their effect on $\mathcal{U}_I$, are of independent importance to combinatorial algebraists \cite{BC,N,TT,V}. For instance, previous research in this area has revealed connections with algebraic statistics \cite{DLST}, commutative-algebraic techniques  \cite{CDSRVT,GHKKPVT}, and network complexity \cite{DSNR}. In the final section of this article, we will discuss weaknesses and practical concerns regarding the protocol $\mathcal{P}$. We also identify areas for potential future research.

\noindent
{\bf Acknowledgments.} 
We thank Sarah Arpin for many helpful conversations and references on post-quantum cryptography. Da Silva's research is supported by NSF LEAPS-MPS Grant 2532757.

\section{Preliminaries} \label{sec: prelim}

In this paper, $\mathbb{K}$ will denote any field. In this section, we will provide a very  brief overview of Gr\"obner bases, Gr\"obner fans and state polytopes. There is a considerable amount of theory involved with these topics, so we provide only what is necessary to understand the subsequent sections, and refer the reader to \cite{CLO,E,S} for further details.

\subsection{Gr\"obner bases} Gr\"obner theory provides a way to associate a monomial ideal $\init_<(I)$ to an ideal $I\subset R=\mathbb{K}[x_1,\ldots, x_n]$. This is done in such a way that many algebro-geometric properties of $I$ can be determined from $\init_<(I)$, especially since  monomial ideals are generally easier to study. A monomial order $<$ on $R$ is a total order on the monic monomials of $R$ such that $u\leq v \Rightarrow wu \leq wv$ and $1\leq w$ for any monomial $w\in R$. Given $\alpha = (\alpha_1,\alpha_2,\ldots,\alpha_n) \in \mathbb{Z}_{\geq 0}^{n}$, we will use the notation \[\mathbf{x}^\alpha:= x_1^{\alpha_1}\cdots x_n^{\alpha_n}.\]

\noindent With a monomial order $<$ on $R$ and a polynomial $f \in R$, we can order all terms of $f$ and define the initial term of $f$ as the monomial $\init_<(f)=c\mathbf{x}^\alpha, c\in\mathbb{K},$ which is greatest term of $f$ with respect to $<$.

\begin{definition} 
Let $<$ be a monomial order on $R$ and let $I \subseteq R$ be an ideal. The \emph{initial ideal} of $I$, denoted by $\init_<(I)$, is the monomial ideal in $R$ defined by
\begin{align*}
\init_<(I):=\langle \init_<(f)| f\in I \rangle.
\end{align*}
\end{definition}

Unfortunately, the initial terms of a generating set of $I$ do not generally  constitute a generating set of $\init_<(I)$. This leads us to the definition of a Gr\"obner basis.

\begin{definition}
Given an ideal $I \subseteq R$, a set $\mathcal{G}=\{g_1, \ldots, g_t\} \subset I$ is a \emph{Gr\"obner basis} for $I$ if $I= \langle g_1,\ldots,g_t \rangle \text{ and } \init_<(I)=\langle \init(g_1), \ldots,\init(g_t) \rangle.$ A \emph{universal Gr\"obner basis} $\mathcal{U}_I$ for an ideal $I$ is a generating set for $I$ which is a Gr\"obner basis for $I$ with respect to \emph{any} monomial order on $R$.
\end{definition}

\begin{example}
    
Let $I = \langle ag-bf, ce-dg \rangle \subset \mathbb{K}[a,b,c,d,e,f,g]$ and set $\mathcal{G} = \{ag-bf, ce-dg \}$. Given the lexicographic monomial  ordering $<_1$ defined by $a>b>c>d>e>f>g$ we can show (using \texttt{Macaulay2} for example) that \[\init_{<_1}(I) = \langle ag, ce \rangle = \langle\init_{<_1}(ag-bf), \init_{<_1}(ce-dg)\rangle,\] which implies that $\mathcal{G}$ is a Gr\"obner basis for $I$ with respect to $<_1$. 

If however we used the lexicographic monomial ordering $<_2$ defined by $d >a > b>c>e>f>g$, we would get $\init_{<_2}(I) = \langle ag,bdf, dg  \rangle$. Therefore, for $<_2$, $\mathcal{G}$ is not a Gr\"obner basis for $I$. To extend $\mathcal{G}$ to a Gr\"obner basis of $I$ with respect to $<_2$, we would need to include the polynomial
\begin{align*}
S(dg-ce,ag-bf) &= \frac{\init_{<_2}(ag-bf)}{\mathrm{gcd}(dg,ag)}\cdot (dg-ce) - \frac{\init_{<_2}(dg-ce)}{\mathrm{gcd}(dg,ag)}\cdot (ag-bf)  \\    
&=a(dg-ce) -d(ag-bf) \\
&= bdf-ace.
\end{align*} 

\noindent which is found by applying Buchberger's algorithm. This involves computing the $S$-polynomials between pairs of generators, finding the remainder after polynomial division by $\mathcal{G}$, and extending $\mathcal{G}$ by adjoining any non-zero remainders. This process is repeated until all remainders are $0$. For specifics about the algorithm, refer to \cite{CLO}. \hfill $\square$

\end{example}

Although Buchberger's algorithm provides a method for constructing a Gr\"obner basis for any given $I\subseteq R$ with respect to some given monomial order $<$, modern techniques have become more sophisticated and differ from the $S$-polynomial computation above \cite{BFS}.

A fundamental result in the theory of monomial ideals is Dickson's Lemma which states that every monomial ideal has a unique minimal monomial generating set. If some fixed monomial order of $R$ is also given, then every monomial ideal of $R$ will have a unique minimal \emph{ordered} generating set. 

\begin{lemma}\label{lem: unique monomial}
Given a monomial order $<$ on $R=\mathbb{K}[x_1,\ldots,x_n]$ and a monomial ideal $M\subseteq R$, there exists a unique minimal monomial ordered generating set of $M$.
\end{lemma}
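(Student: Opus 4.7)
The plan is to reduce the statement to the well-known fact that every monomial ideal admits a unique minimal monomial generating set (the unordered version of Dickson's Lemma), and then observe that a monomial order linearly orders any finite set in a unique way.

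First I would invoke Dickson's Lemma to guarantee the existence of a finite monomial generating set for $M$; by repeatedly removing any monomial in this set that is divisible by another, I obtain a finite minimal monomial generating set $G=\{m_1,\ldots,m_t\}$, meaning that no $m_i$ divides $m_j$ for $i\ne j$. The key lemma I would use throughout is the standard fact that a monomial $m$ lies in a monomial ideal $\langle u_1,\ldots,u_s\rangle$ if and only if some $u_k$ divides $m$; this follows by comparing exponent vectors term-by-term in any expression $m=\sum h_k u_k$.

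Next I would prove uniqueness. Suppose $G=\{m_1,\ldots,m_t\}$ and $G'=\{m_1',\ldots,m_s'\}$ are two minimal monomial generating sets of $M$. Since each $m_i\in M=\langle G'\rangle$, there is some $m_{\sigma(i)}'\in G'$ that divides $m_i$; similarly, $m_{\sigma(i)}'$ is divisible by some $m_{\tau\sigma(i)}\in G$. Hence $m_{\tau\sigma(i)}\mid m_i$, and by minimality of $G$ this forces $m_{\tau\sigma(i)}=m_i$, which in turn forces $m_{\sigma(i)}'=m_i$. Thus every element of $G$ lies in $G'$, and by symmetry $G=G'$ as sets.

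Finally, because the monomial order $<$ is a total order on the monic monomials of $R$, it restricts to a unique total order on the finite set $G$, producing the unique minimal \emph{ordered} generating set claimed. I do not anticipate a serious obstacle; the only subtlety worth emphasizing in the write-up is the divisibility characterization of membership in a monomial ideal, which is what makes both the extraction of a minimal generating set and the uniqueness argument go through cleanly.
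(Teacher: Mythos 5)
Your argument is correct and is the standard proof; the paper itself states Lemma \ref{lem: unique monomial} without proof, presenting it as an immediate consequence of Dickson's Lemma, which is exactly the reduction you carry out. The divisibility criterion for membership in a monomial ideal, the back-and-forth divisibility argument for uniqueness of the minimal set, and the observation that a monomial order totally orders the (finite) minimal set are all the right ingredients and are executed correctly.
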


We will be applying hash functions to sets of monomial generators, so having a unique way to write a given list of monomials is necessary. In defining the cryptosystem presented in the next section, we will also discuss sets of minimal monomial ideal generators which have bounded exponents, so we conclude this section with the following definition.

\begin{definition}\label{def: M_k}

We say that a set of monomials $\{\mathbf{x}^{\alpha_1},\ldots, \mathbf{x}^{\alpha_r}\}\subset R=\mathbb{K}[x_1,\ldots,x_n]$ is minimal if it is the unique minimal monomial generating set of the ideal $\langle \mathbf{x}^{\alpha_1},\ldots, \mathbf{x}^{\alpha_r}\rangle$. Then define $\mathcal{M}_k$ to be the set
\[\mathcal{M}_k := \big\{\{\mathbf{x}^{\alpha_1},\ldots, \mathbf{x}^{\alpha_r}\}\subset R\hspace{1mm}| \{\mathbf{x}^{\alpha_1},\ldots, \mathbf{x}^{\alpha_r}\}  \text{ is minimal}; \alpha_{i,j}< k, 1\leq i \leq r, 1\leq j\leq n\big\}.\]
\end{definition}

\subsection{Gr\"obner Fans} \label{sec: GrobFan}

Given an ideal $I\subset R=\mathbb{K}[x_1,\ldots, x_n]$, there is a formal combinatorial structure for enumerating all possible initial ideals of $I$. We first need to define what it means for two monomial orders to be equivalent, which is best done in the more general setting of weight orders. 

\begin{definition}
 Given a polynomial ring $R=\mathbb{K}[x_1, \ldots, x_n]$ and weight vector $w=(w_1, \ldots ,w_n) \in \mathbb{R}^n$, we can define a \emph{weight order} $<_w$ on $R$ by
 \begin{center}
$x_1^{a_1} \cdot \ldots \cdot x_n^{a_n} \leq_w x_1^{b_1} \cdot \ldots \cdot x_n^{b_n}$ \text{ if and only if } $a_1w_1+ \ldots + a_nw_n \leq_w b_1w_1+ \ldots + b_nw_n$.
 \end{center}
\end{definition}

\begin{remark}
For a fixed $I$ and monomial order $<$ on $R$, there exists a weight $w \in \mathbb{R}^n$ such that $\init_<(I)=\init_{<_w}(I)$. In fact, every monomial order is a weight order, but not every weight order is a monomial order. In particular, $\init_{<_w}(\cdot)$ does not necessarily yield a unique initial monomial term for every choice of $w$. As an example, let $f = x^2y+x^{10}-y^2 \in \mathbb{K}[x,y]$ and $w=(2,10)$. Then $\init_{<_w}(f) = x^{10}-y^2$.
\end{remark}

Given an ideal $I$, we can define an equivalence relation $\sim$ on $\mathbb{R}^n$ as $w_1\sim w_2$ if and only if $\init_{<_{w_1}}(I)=\init_{<_{w_2}}(I)$. If the initial ideal is a monomial ideal, then the collection of weight vectors in that equivalence class define a maximal cone in $\mathbb{R}^n$ and the union of these cones is called the \emph{Gr\"obner fan of I}, denoted by $GFan(I)$. We will refer to a maximal cone in $GFan(I)$ as a \emph{Gr\"obner region} (or \emph{Gr\"obner cone}), and denote it by  $GR_<(I)$. Here, $GR_<(I)$ has a representative weight order $<$ and corresponds to a distinct monomial initial ideal of $I$ together with a marked reduced Gr\"obner basis $\mathcal{G}_<$ for $I$. The union of these reduced Gr\"obner bases defines a universal Gr\"obner basis of $I$. See \cite{S} for more information about Gr\"obner fans, and for proofs of these facts. The Gr\"obner fan is the normal fan of a polytope \cite[Theorem 2.5]{S} called the \emph{state polytope of $I$}, and is denoted by $State(I)$. See \cite{CLS} for more information on correspondence between normal fans and convex polytopes.  

\begin{proposition}\cite[Corollary 1.3]{S}
Let $I\subset R=\mathbb{K}[x_1,\ldots,x_n]$, and suppose that  $\mathcal{G}_<\subset R$ such that $\init_<(\mathcal{G}_<)$ is the initial ideal of $I$ associated to the Gr\"obner region $GR_<(I)$. Then \[ \bigcup_{GR_<(I)\subseteq GFan(I)} \mathcal{G}_<\] defines a universal Gr\"obner basis of $I$.
\end{proposition}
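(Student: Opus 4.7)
The plan is to verify directly the two defining conditions of a universal Gr\"obner basis for the set $\mathcal{U}:=\bigcup_{GR_<(I)\subseteq GFan(I)}\mathcal{G}_<$: namely, that $\mathcal{U}$ is a subset of $I$ which generates $I$, and that $\mathcal{U}$ is a Gr\"obner basis of $I$ with respect to \emph{every} monomial order on $R$. Containment $\mathcal{U}\subseteq I$ is immediate since each $\mathcal{G}_<$ is a reduced Gr\"obner basis of $I$. Once I show that $\mathcal{U}$ is a Gr\"obner basis for every order, the ideal-generation property will come for free, because any single $\mathcal{G}_<\subseteq\mathcal{U}$ already generates $I$.

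The main step I would carry out is the following. Fix an arbitrary monomial order $<'$ on $R$. Using the remark in Section \ref{sec: GrobFan}, I pick a weight vector $w\in\mathbb{R}^n$ such that $\init_{<_w}(I)=\init_{<'}(I)$. Since $\init_{<'}(I)$ is a monomial ideal, $w$ must lie in the interior of some maximal cone $GR_<(I)$ of $GFan(I)$, whose representative order $<$ satisfies $\init_<(I)=\init_{<_w}(I)=\init_{<'}(I)$. Then the reduced Gr\"obner basis $\mathcal{G}_<$ attached to that cone is in particular a Gr\"obner basis for $I$ with respect to $<'$, because the two orders produce the same initial ideal of $I$. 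Since $\mathcal{G}_<\subseteq\mathcal{U}\subseteq I$ and the initial terms of the remaining elements of $\mathcal{U}$ automatically lie in $\init_{<'}(I)$, the whole set $\mathcal{U}$ is a Gr\"obner basis for $I$ with respect to $<'$. Since $<'$ was arbitrary, $\mathcal{U}$ is a universal Gr\"obner basis.

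The main obstacle, though largely bookkeeping, is justifying that the weight vector $w$ representing $<'$ can be chosen to lie in the \emph{interior} of a maximal cone of the Gr\"obner fan, rather than on a lower-dimensional face where $\init_{<_w}(I)$ could fail to be monomial. This follows from the setup of the Gr\"obner fan itself: the maximal cones are precisely the closures of equivalence classes of weight vectors producing a given monomial initial ideal, so a $w$ whose initial ideal is already monomial sits in the interior of a unique maximal cone. I would invoke this directly from \cite{S} rather than reproving it. A small secondary check is that augmenting a Gr\"obner basis with additional elements of $I$ preserves the Gr\"obner basis property, which is immediate from the definition of $\init_<(I)$ as the ideal generated by initial terms of \emph{all} elements of $I$.
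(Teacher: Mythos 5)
The paper does not supply a proof of this proposition: it is stated as a citation to Sturmfels \cite[Corollary 1.3]{S}, with the surrounding paragraph deferring to that reference. So there is no internal argument to compare against; what you have written is a reconstruction of the standard proof, and it is essentially correct. The structure is right: fix an arbitrary monomial order $<'$, replace it by a compatible weight $w$, locate $w$ in the interior of the maximal cone whose representative order $<$ satisfies $\init_<(I)=\init_{<'}(I)$, conclude that $\mathcal{G}_<$ is a Gr\"obner basis for $<'$, and observe that enlarging a Gr\"obner basis by further elements of $I$ keeps it a Gr\"obner basis.

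One step deserves to be made explicit, because as written it is stated as if it were automatic when it actually uses the reducedness of $\mathcal{G}_<$ in an essential way. You say $\mathcal{G}_<$ is a Gr\"obner basis with respect to $<'$ ``because the two orders produce the same initial ideal of $I$.'' Equality of initial ideals alone does not force an arbitrary Gr\"obner basis for $<$ to be one for $<'$: a redundant element $g$ of a non-reduced basis could have $\init_{<'}(g)\neq\init_<(g)$, and the set of $<'$-leading terms might then fail to generate $\init_{<'}(I)$. What saves the argument is that each element of the reduced basis has the form $g=m_g+r_g$ where $m_g$ is a minimal generator of $\init_<(I)$ and every term of $r_g$ is a standard monomial, i.e.\ lies outside $\init_<(I)=\init_{<'}(I)$. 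Since $\init_{<'}(g)$ must be a term of $g$ lying in $\init_{<'}(I)$, it can only be $m_g$. This is the reason the marked reduced Gr\"obner basis is determined by the initial ideal alone, and it is the fact you are implicitly invoking; you should state it. Your secondary remarks (that a weight vector representing $<'$ sits in the relative interior of a unique maximal cone precisely because $\init_{<'}(I)$ is monomial, and that adjoining extra elements of $I$ cannot destroy the Gr\"obner basis property) are both correct and appropriately deferred to \cite{S}.
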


With this structure, we now turn to the question of how to compute a Gr\"obner fan of an ideal $I$. Our protocol uses keys which are defined by initial ideals of $I$, so each Gr\"obner region of $GFan(I)$ defines a distinct key that can be used for encryption. Indeed, if an attacker with no knowledge about how $I$ or $\mathcal{U}_I$ were constructed wanted to compute $\mathcal{U}_I$ directly from $I$, they would need to first compute the Gr\"obner fan of $I$, and then take a union of all Gr\"obner basis representatives for the cones in the fan.

In \cite[Section 7]{S}, Sturmfels proposed an algorithm for computing $GFan(I)$, which was later implemented with the \texttt{Gfan} package \cite{J2} of \texttt{Macaulay2}. An analysis of the complexity of the algorithm was considered in \cite{FJT}. In this algorithm, there are three main steps that are iterated at each vertex of $State(I)$ (i.e. at each vertex of the graph of the polytope $State(I)$). We refer the reader to the aforementioned sources for the technical details of the algorithm, and only provide a very brief summary of the nature of each step below.

In the algorithm, we start with one known Gr\"obner region of the fan (e.g. by directly computing a Gr\"obner basis with respect to some order). We then travel from one maximal cone to a neighboring maximal cone while also computing a Gr\"obner basis representative for the new cone by a mutation of the Gr\"obner basis for the current one. 

\begin{enumerate}
   \item Each fan can be viewed as the normal fan to a dual object called a state polytope $State(I)$. The graph $(V,E)$ of the Gr\"obner fan is the $1$-skeleton of $State(I)$. Each vertex in the polytope corresponds to a cone in $GFan(I)$. Computing normal vectors to facets for this polytope requires time denoted by $T_{\mathrm{facets}}$.
   \item Starting at a vertex, we choose a sequence of admissible ``shoot'' edges in $State(I)$ until all vertices have been reached (using a reverse search algorithm). Selecting these admissible search edges requires time denoted by $T_{\mathrm{shoot}}$.
   \item With an admissible edge chosen, there exists an operation to change the marked Gr\"obner basis for one vertex to a marked Gr\"obner basis for another vertex. This is called a ``flip'' procedure and is the most time consuming step to complete. The time for this procedure is denoted by $T_{\mathrm{flip}}$.
\end{enumerate}

\begin{remark}\label{rmk: lattice}
We will later compare the complexity of computing $GFan(I)$ with other lattice-based primitives. Since the construction of $GFan(I)$ involves cones in $\mathbb{R}^n$, we can always choose a weight vector $w$ in each cone with only integer entries. This allows us to view $GFan(I)$ and $State(I)$ as lattice constructions by intersecting with $\mathbb{Z}^n$.
\end{remark}

\section{A quantum-resistant key establishment protocol} \label{sec: protocol}

In this section we will introduce a public encryption protocol that uses Gr\"obner bases for symmetric key establishment. The main idea is to leverage the difficulty in computing a universal Gr\"obner basis of a polynomial ideal compared to the computational time needed to compute a single Gr\"obner basis. We start by providing an example which illustrates the spirit of the algorithm, and in the subsequent section, we formally describe the protocol parameters and  implementation. Experts who wish to only consider the formal description of the protocol rather than a particular small-scale implementation may skip to Section \ref{sec: parameters}.

\subsection{An illustrative example} \label{sec: protocol example}

Let $A$ and $B$ be two parties who wish to securely communicate but have not previously communicated to share a common key. In this example, $B$ would like to send an encrypted message to $A$ using information publicly provided by $A$. Specific (simple) choices of protocol parameters have been chosen for this example. 

Suppose that $A$ has an ideal $I$ for which a universal Gr\"obner basis $\mathcal{U}_I$ is known. For this example, suppose that \[I=\langle ag-bf, ce-dg, ace-bdf \rangle \subset \mathbb{K}[a,\ldots,g]\]
\noindent is an ideal, where $\mathbb{K}$ is any field and $a,\ldots, g$ are indeterminates of the polynomial ring. This list of generators is kept private and provides Party $A$ with a fast way to compute initial ideals of $I$ (see Section \ref{sec: GrobFan} for a background on Gr\"obner fans). A second list of minimal generators $\mathcal{R}_{I}$ is also computed for $I$. For instance, $A$ may choose \[\mathcal{R}_{I} = \{ag-bf,ce-dg\}.\] It is worth noting that $ace-bdf=a(ce-dg)+d(ag-bf)$, so these generators still define the ideal $I$, but do not always form a Gr\"obner basis of it, depending on the monomial order. Party $A$ then makes $\mathcal{R}_{I}$, $\mathbb{K}$, and the number of variables public, together with information about the encryption scheme. 

Suppose that $B$ would like to send $A$ a secure message. $B$ proceeds to produce an encryption key using that public information that $A$ has provided. They do this by randomly selecting some monomial order $<_B$ of $\mathbb{K}[a,\ldots,g]$, and then proceeds to compute $\init_{<_B}\langle \mathcal{R}_I\rangle$. Both $<_B$ and the initial ideal are kept private. There is a unique minimal generating set for this monomial ideal by Lemma \ref{lem: unique monomial}, and based on a public hash function $\eta$ provided by $A$, converts this monomial ideal into an encryption key. For this example, $B$ chooses the lexicographic order $c>d>e>f>g>b>a$ so that \[\init_{<_B}\langle \mathcal{R}_I\rangle = \langle bf,ce\rangle.\] At this point, $B$ would apply $\eta$ to $\{bf,ce\}$. As an overly simplified method for generating an encryption key for this example, let us concatenate the exponent vectors of each minimal monomial generator (using a predefined order from the function provided by $A$). Then
\[bf = a^0b^1c^0d^0e^0f^1g^0  \longrightarrow 0100010\]
\[ce = a^0b^0c^1d^0e^1f^0g^0 \longrightarrow 0010100\]

\noindent Therefore, $B$'s private encryption key is $K_B=01000100010100$ which is used to encrypt a message using whatever encryption scheme $\mathcal{E}$ Party $A$ has made public. For instance, the encryption scheme may involve using $K_B$ to generate two large primes from a hash function attached to $\mathcal{E}$ and encrypt a message using the RSA encryption scheme (without making the product of the two primes public -- even to $A$). In the first version of the protocol, only this ciphertext is sent to $A$. 

With $\mathcal{U}_I$, $A$ will have the list of all possible initial ideals of $I$, and therefore will have a list of all possible encryption keys that $B$ could have generated. For this example, those keys are:
\begin{align*}
\langle ag, ce \rangle  & \longrightarrow 10000010010100\\  
\langle ag, bdf, dg \rangle &\longrightarrow 100000101010100001001\\
\langle ace, ag, dg \rangle  &\longrightarrow 101010010000010001001\\
\langle bf, ce \rangle  &\longrightarrow 01000100010100\\
\langle bf, dg \rangle  &\longrightarrow 01000100001001\\
\end{align*}

\noindent Party $A$ then tries each key (by a brute-force search) until the message is decrypted. Using some symmetric encryption algorithm $\mathcal{E}$ instead of an asymmetric one would reduce the computational time for Party $A$. 

In an alternate version of the protocol, a hash function $\tau$ is also made public that is used to map $K_B$ to some binary sequence. Then, together with the ciphertext, Party $B$ also sends $\tau(K_B)$. Party $A$, having all possible values of $\tau$ applied to each key, would then know which $K_B$ was chosen, and could decrypt the message immediately. This comes at the cost of reduced security for $\mathcal{P}$.

We will now summarize each step of the protocol and briefly highlight security issues which are elaborated on in Section \ref{sec: security}:

\begin{enumerate}
    \item Party $A$ is assumed to have a universal Gr\"obner basis $\mathcal{U}_I$ for some ideal $I\subset \mathbb{K}[x_1\ldots, x_n]$ and some trimmed generating set $\mathcal{R}_I$. This trimmed list is made public, together with a list of conventions needed for $B$ to establish an initialization vector. A method for $A$ to quickly find examples of $I$ and $\mathcal{U}_I$ can be found in Section \ref{sec: initial}. The complexity of computing $\mathcal{R}_{I}$ can be found in Theorem \ref{thm: trim ideal}.
    \item If $B$ wants to send a message to $A$, then a monomial order $<_B$ needs to be chosen, and the initial ideal $\init_{<_B}\langle\mathcal{R}_I\rangle$ needs to be computed. This can be done relatively quickly, as described in Theorem \ref{thm: GB complexity}. By Lemma \ref{lem: unique monomial}, there exists a unique minimal generating set of $\init_{<_B}\langle\mathcal{R}_I\rangle$. $B$ uses this unique generating set to produce an encryption key using a function $\eta$  provided by $A$. The  message is then encrypted using the encryption scheme $\mathcal{E}$ (also publicly provided by $A$). 
    \item If $\tau$ is not provided, then Party $B$ only sends the ciphertext to $A$, keeping all other parameters private. If an attacker intercepts the message, they would not know which encryption key was used to create the message, and reducing the list of possible keys to a manageable list would require knowledge of $\mathcal{U}_I$. An analysis of the complexity of doing this and the general security of the system can be found in Section \ref{sec: quantum}. If $\tau$ is provided, then Party $B$ would also send $\tau(K_B)$ together with the ciphertext.
    \item Since $A$ has a list of all possible initial ideals of $I$ (which defines a Gr\"obner fan -- see Section \ref{sec: GrobFan}), they also have a list of all possible encryption keys. The message can then be decrypted by trying each one (in the case that $\tau$ is not provided). The time complexity for this is described in Section \ref{sec: complexity}. Variations that help cut down $A$'s decryption time are also discussed there. In the case when $\tau$ is provided, Party $A$ could decrypt the ciphertext without iterating over all keys.
\end{enumerate}

This protocol relies on the fact that it is both time consuming to enumerate all potential encryption keys and to directly try and generate a universal Gr\"obner basis for $I$. Therefore, an attacker would need to spend an unreasonable amount of time and resources to decrypt the message, unless they already had a universal Gr\"obner basis for $I$. If this is the case, then how is it possible for $A$ to even initialize the system? Using a combinatorial approach, we will show how to efficiently generate examples with symmetries that are kept private but for which $\mathcal{U}_I$ is easy to compute.

\subsection{Protocol Parameters} \label{sec: parameters}

As seen in the previous section, there are a number of parameters which can be chosen when setting up this cryptosystem. We provide the theoretical framework for this protocol and leave choices of specific parameters unspecified, recognizing that there may be practical implementation concerns that may require flexibility. 

\begin{definition}\label{def: par}
Let $\mathcal{P}= (n,\mathbb{K}, \mathcal{U}_I, \mathcal{R}_I, <_B, \eta, \mathcal{E},\tau)$ be a tuple of parameters defined by:

\begin{itemize}
    \item $n \in\mathbb{N}$ and $\mathbb{K}$ is a field.
    \item $\mathcal{U}_I$ is a universal Gr\"obner bases for an ideal $I\subset \mathbb{K}[x_1,\ldots,x_n]$. 
    \item $\mathcal{R}_I$ is some (usually minimal) generating set of $I$. 
    \item $<_B$ is a monomial order on $\mathbb{K}[x_1,\ldots,x_n]$.
    \item $\eta: \mathcal{M}\rightarrow \{0,1\}^s$ is a hash function on sets of monomials whose domain contains $\mathcal{M}_k$ (see Definition \ref{def: M_k}) to binary strings of fixed length $s$.
    \item $\mathcal{E}$ is an encryption algorithm with an encryption key $K$ and plaintext $X$ as input, and whose output is a ciphertext $\mathcal{E}(K,X)$.
    \item $\tau$ is either a hash function on binary sequences $\{0,1\}^s$, or is set to $\emptyset$ if not being utilized.
\end{itemize}
\end{definition}

Let $A$ and $B$ be two parties who wish to securely communicate but have not previously communicated to share a common key. 

\begin{enumerate}[(I)]
    \item \textbf{Initialization:} Party $A$ generates an ideal $I\subset \mathbb{K}[x_1,\ldots, x_n]$ and a list of polynomials $\mathcal{U}_I\subset \mathbb{K}[x_1,\ldots, x_n]$ which define a universal Gr\"obner basis of $I$. A reduced generating set $\mathcal{R}_I$ for $I$ is also computed. The set  $\mathcal{U}_I$ is kept private. The information $(n,\mathbb{K}, \mathcal{R}_I, \eta, \mathcal{E},\tau)$ is made public, and is sent to $B$: \[ A\longrightarrow B: \mathcal{P}_{\mathrm{public}}=(n,\mathbb{K}, \mathcal{R}_I, \eta, \mathcal{E},\tau).\]
    \item \textbf{Key Generation:} $B$ uses  $\mathcal{P}_{\mathrm{public}}$ to generate a private key $K_B$ by first selecting a monomial order $<_B$ on $\mathbb{K}[x_1,\ldots,x_n]$. $B$ then computes $\init_{<_B}\langle \mathcal{R}_I\rangle$ which has a unique minimal monomial generating set $\mathcal{G}$. Then $K_B = \eta(\mathcal{G})$, which is kept private. 
    \item \textbf{Encryption:} $B$ can now encrypt any plaintext $X_B$ using $\mathcal{E}$ and $K_B$. If $\tau=\emptyset$, only this ciphertext is sent to $A$: \[B\longrightarrow A: \mathcal{E}(K_B,X_B).\] For a protocol defined with $\tau\neq\emptyset$,  the value $\tau(K_B)$ is also sent:  \[B\longrightarrow A: (\mathcal{E}(K_B,X_B),\tau(K_B)).\]
    \item \textbf{Decryption:} Since $A$ has a universal Gr\"obner basis $\mathcal{U}_I$, they possess all possible initial ideals $\init_<(I)$, and hence possess all possible encryption keys $\mathcal{K_I}$. If $\tau=\emptyset$, then for each $\kappa\in\mathcal{K}_I$, $A$ can compute $\mathcal{E}^{-1}(\kappa, \mathcal{E}(K_B,X_B))$ until the message is decrypted by brute force (choosing $\mathcal{E}$ to be a symmetric encryption algorithm will reduce this computational time). 
    For a protocol defined with $\tau\neq\emptyset$, Party $A$ also has the list $\tau(\mathcal{K}_I)$, and hence knows which $K_B$ was used to produce $\tau(K_B)$, and can therefore immediately compute $\mathcal{E}^{-1}(K_B, \mathcal{E}(K_B,X_B))$.
\end{enumerate}

\begin{remark} When $\tau=\emptyset$ and Party $A$ needs to iterate over all encryption keys, there is a question of how they will know when the ciphertext has been correctly decrypted. Besides being a readable message, one method is to require the message to contain some embedded marker  defined from $\mathcal{E}$.
\end{remark}

With this language, there are infinitely many cryptosystems that can be initialized, depending on the parameters of $\mathcal{P}$. We will discuss the security of this system and how to compute each component in Section \ref{sec: security}. For now, we show that even if a breach were to occur where $\mathcal{U}_I$ were made public, the cryptosystem still provides a reasonable amount of security.

\begin{proposition}\label{prop: data breach}
Suppose that $\mathcal{P}= (n,\mathbb{K}, \mathcal{U}_I, \mathcal{R}_I, <_B, \eta, \mathcal{E},\tau)$ is a set of protocol parameters as in Definition \ref{def: par} such that $\mathcal{U}_I=\mathcal{R}_{I}$. Denote $r=|\mathcal{U}_I|$ and let $m$ be the maximum number of monomial terms of a polynomial in $\mathcal{U}_I$. Then \[\#\text{ of possible encryption keys of $\mathcal{P}$ } \leq m^r.\]

\noindent In particular, the number of operations needed to conduct a brute-force attack of a system where $\mathcal{U}_I$ is public is $O(m^r)$.
\end{proposition}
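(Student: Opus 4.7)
The plan is to exploit the defining property of a universal Gröbner basis to bound the space of initial ideals that can arise from the publicly known generating set $\mathcal{R}_I = \mathcal{U}_I$. Since $\mathcal{U}_I$ is a Gröbner basis with respect to every monomial order on $\mathbb{K}[x_1,\ldots,x_n]$, for any choice of $<_B$ by Party $B$ we will have the equality
\[
\init_{<_B}\langle \mathcal{R}_I\rangle \;=\; \langle \init_{<_B}(g) : g \in \mathcal{U}_I \rangle.
\]
This reduces the question of how many initial ideals can possibly arise to the question of how many tuples of leading terms can appear across the $r$ generators.

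From here I would run a direct counting argument. Each polynomial $g \in \mathcal{U}_I$ has at most $m$ monomial terms, so $\init_{<_B}(g)$ can take at most $m$ distinct values as $<_B$ varies over all monomial orders. Applying this coordinatewise over the $r$ elements of $\mathcal{U}_I$ produces at most $m^r$ distinct ordered tuples of potential leading terms, and consequently at most $m^r$ distinct ideals of the form $\langle \init_{<_B}(g) : g \in \mathcal{U}_I\rangle$. By Lemma \ref{lem: unique monomial}, each such monomial ideal has a unique minimal monomial generating set $\mathcal{G}$, so the assignment $<_B \mapsto \mathcal{G}$ has image of size at most $m^r$. Since the encryption key is $K_B = \eta(\mathcal{G})$, the total number of possible encryption keys is bounded above by $m^r$.

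For the complexity claim, an attacker in possession of $\mathcal{U}_I$ can realize this bound constructively: enumerate the (at most) $m^r$ choices of a leading term from each $g \in \mathcal{U}_I$, pass to the minimal generating set of the resulting monomial ideal, apply $\eta$, and attempt decryption with $\mathcal{E}^{-1}$. This gives an overall cost of $O(m^r)$ trial keys, with the extraction of minimal generators and evaluation of $\eta$ contributing only polynomial overhead in $n$ and $r$ that does not affect the stated asymptotic.

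The only subtlety worth flagging is that distinct tuples of leading-term choices may collapse to the same minimal generating set (when one candidate initial term divides another), which only strengthens the inequality, and that not every tuple of selections is realized by some monomial order — again harmless, since our argument only needs the upper bound. So the main "work" is really the opening observation that the universal Gröbner basis property forces the initial ideal to be generated term-by-term from $\mathcal{U}_I$; the rest is essentially a product-of-choices count.
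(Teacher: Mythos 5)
Your argument is correct and follows the paper's proof essentially verbatim: pick one term from each of the $r$ generators ($\leq m$ choices each) to get the $m^r$ bound, then note that the per-key overhead (minimal generators, $\eta$, $\mathcal{E}^{-1}$) is a constant factor. The only difference is that you spell out the use of the universal Gröbner basis property and the harmlessness of unrealizable/collapsing tuples, which the paper leaves implicit.
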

\begin{proof}
Any potential initial ideal of $I$ can be read off from $\mathcal{U}_I$ by picking one term from each $f\in \mathcal{U}_I$. There are at most $m$ such terms for each $f$, leaving at most $m^r$ many possibilities. 

The remaining operations of converting each list of monomials to an encryption key via $\eta$, and decrypting the message using $\mathcal{E}^{-1}$ is some constant multiple of $m^r$ (for bounded key and message sizes).
\end{proof}

\begin{remark}
Note that each potential list of monomials from the proof of Proposition \ref{prop: data breach} may not produce a valid initial ideal since there may not exist a monomial order of $\mathbb{K}[x_1,\ldots, x_n]$ which yields that particular list of monomials (for example, there is no monomial order which picks $x_1x_2$ as an initial term of $x_1^2+x_1x_2+x_2^2$). We also saw this in Section \ref{sec: protocol example} where there were a priori $2^3$ possible lists of monomials, but only 5 defined actual keys.
\end{remark}

In another extreme, suppose that $\mathcal{U}_I$ remains private and an attacker wanted to generate all potential encryption keys for a system $\mathcal{P}$ without computing a universal Gr\"obner basis for $I$. They would need to generate all encryption keys using $\eta$ associated to $\mathcal{M}_k$. With the assumption that the allowed exponent vectors have entries strictly smaller than $k$, the number of possible encryption keys is a doubly exponential in $n$. Even in the square-free case where $k=2$, a brute-force attack would be unwieldy.

\begin{proposition}\label{prop: max keys}
Let $\mathcal{P}= (n,\mathbb{K}, \mathcal{U}_I, \mathcal{R}_I, <_B, \eta, \mathcal{E},\tau)$ be as in Definition \ref{def: par} such that $\eta$ is a map whose domain contains $\mathcal{M}_k$. Then \[\#\text{ of possible encryption keys of $\mathcal{P}$ } \leq 2^{k^n}.\]

\noindent In particular, the number of operations needed to conduct a brute-force attack of a system $\mathcal{P}$ is $O(2^{k^n})$.
\end{proposition}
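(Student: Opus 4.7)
The plan is to bound $|\mathcal{M}_k|$ by a direct counting argument and then observe that every encryption key arises as the image under $\eta$ of some element of $\mathcal{M}_k$.

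First I would enumerate the monomials available. Under the hypothesis that exponent vectors $\alpha$ of every monomial in a set from $\mathcal{M}_k$ satisfy $\alpha_j < k$ for $1\leq j\leq n$, each such exponent vector lies in $\{0,1,\ldots,k-1\}^n$. Hence there are exactly $k^n$ distinct monomials available, namely the set
\[\mathcal{S}_k := \{\mathbf{x}^\alpha \mid \alpha\in\{0,1,\ldots,k-1\}^n\}.\]

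Next, any element of $\mathcal{M}_k$ is by definition a (minimal) subset of $\mathcal{S}_k$. Since the total number of subsets of a set of cardinality $k^n$ is $2^{k^n}$, we get the crude bound $|\mathcal{M}_k| \leq 2^{k^n}$. Every encryption key of $\mathcal{P}$ is produced as $\eta(\mathcal{G})$ for some $\mathcal{G}$ in the domain of $\eta$ that arises as the unique minimal generating set of an initial ideal $\init_{<_B}\langle \mathcal{R}_I\rangle$; by Lemma \ref{lem: unique monomial} this $\mathcal{G}$ lies in $\mathcal{M}_k$. Therefore the number of possible encryption keys is at most $|\mathcal{M}_k|\leq 2^{k^n}$.

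For the complexity claim, a brute-force attacker with no access to $\mathcal{U}_I$ has no way to rule out any candidate in $\mathcal{M}_k$, so in the worst case must iterate through all at most $2^{k^n}$ candidate keys, applying $\mathcal{E}^{-1}$ once per candidate; under the standing assumption of bounded key and message size, each trial is $O(1)$ and the total cost is $O(2^{k^n})$.

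There is no real obstacle here: the only subtlety is making sure the reader understands that the bound $2^{k^n}$ is genuinely pessimistic, since (a) not every subset of $\mathcal{S}_k$ is minimal, and (b) not every minimal subset arises as the initial ideal of $I$ for some monomial order. A brief parenthetical remark mirroring the one following Proposition \ref{prop: data breach} would emphasize this, but it is not needed for the upper bound itself.
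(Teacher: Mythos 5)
Your proposal is correct and follows essentially the same route as the paper's proof: count the $k^n$ monomials with exponent vectors in $\{0,\ldots,k-1\}^n$, bound $|\mathcal{M}_k|$ by the number $2^{k^n}$ of subsets of this list, and note that each candidate costs $O(1)$ to check, giving $O(2^{k^n})$ overall. The only cosmetic difference is that you spell out the connection between keys and $\mathcal{M}_k$ via $\eta$ and Lemma \ref{lem: unique monomial}, whereas the paper leaves that implicit.
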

\begin{proof}
There are exactly $k^n$ monomials in $\mathbb{K}[x_1, \ldots, x_n]$ whose exponent vector entries are bounded above by $k$. A minimal generating set of a monomial ideal is a subset of this list of monomials, so the number of possible subsets is $2^{k^n}$. Different choices of a subset may result in the same monomial ideal, but each subset results in one case that needs to be checked. The complexity bound follows similarly to the proof of Proposition \ref{prop: data breach}.
\end{proof}

\begin{remark}
An attacker could also try to compute all possible initial ideals of $I$ by  enumerating all monomial orders on $\mathbb{K}[x_1,\ldots, x_n]$ and computing a Gr\"obner basis directly for each one. Using only the lexicographic orders shows that this approach has at least $n!$ cases to check (there are also other monomial orders which are not equivalent to a lexicographic order). This, together with the amount of time needed to perform one Gr\"obner basis computation would make this approach untenable. 
\end{remark}

\section{Security Analysis} \label{sec: security}

The security of the protocol in the last section depends on the fact that it is generally difficult to compute the universal Gr\"obner basis for an ideal. Much work has been done to describe procedures for this computation (for example, see \cite{FJT,S}), and there exist programs to compute such a basis. However, even for small ideals, the computational complexity can be a barrier to finding an explicit list. For example, let $\mathrm{Det}_{t,m,n}$ denote the ideal in the polynomial ring in $mn$ variables generated by the $t \times t$ minors of an $m \times n$ generic matrix. For $\mathrm{Det}_{3,4,4}$ alone, there are over 160,000 possible initial ideals \cite{FJT}, so a brute-force universal Gr\"obner basis construction would require over 160,000 individual (albeit simplified) Gr\"obner basis computations. 

This section is dedicated to showing that an attacker with no prior knowledge of how $\mathcal{U}_I$ was constructed would require an unreasonable amount of time to compute $\mathcal{U}_I$. In Section \ref{sec: initial}, we contrastingly show that Party $A$ can construct $\mathcal{U}_I$ with relative ease if $I$ is chosen with some symmetries. In this section, we will also show that Party $B$ does not require much time to encrypt a message, and that the required time scales well with any additional complexity that $A$ adds to the system $\mathcal{P}$.

\subsection{Quantum Resistance} \label{sec: quantum}

In Section \ref{sec: GrobFan}, we outlined the steps needed to compute $GFan(I)$, the Gr\"obner fan of $I$, and by extension the state polytope of $I$, $State(I)$. Each maximal cone in $GFan(I)$ (or dually, any vertex of $State(I)$), defines one possible initial ideal of $I$, and hence defines a possible encryption key that Party $B$ could have used. Based on how Party $A$ decrypts messages using $\mathcal{P}$ when $\tau=\emptyset$, a rational attacker without any prior knowledge of the symmetries of $I$ would need to compute $GFan(I)$ to guarantee the decryption of the message (see Propositions \ref{prop: data breach} and \ref{prop: max keys} for other less effective attacks).

The algorithm proposed in \cite{FJT} is a practical implementation of the theoretical algorithm proven in \cite[Section 7]{S}. It involves three main steps that are iterated at each vertex of $State(I)$ (i.e. on the graph of the polytope $State(I)$). These steps are technical, and are only briefly outlined in Section \ref{sec: GrobFan}. The complexity of performing these steps was analyzed in \cite{FJT} and is highlighted in the next theorem. It ultimately shows that an attacker would require an inordinate amount of resources to compute the Gr\"obner fan directly, assuming $I$ is sufficiently complex.

\begin{theorem}\label{thm: Gfan} Let $(V,E)$ be the graph of the Gr\"obner fan of $I$. The time complexity for computing this graph given a marked reduced Gr\"obner basis is the class of functions

\[
O\biggl(\hspace{1mm}\sum_{\mathcal{G} \in V}T_{\mathrm{facets}}(\mathcal{G}) + \sum_{(\mathcal{G}_1,\mathcal{G}_2) \in E}T_{\mathrm{shoot}}(\mathcal{G}_1)+ \sum_{(\mathcal{G}_2,\mathcal{G}_1)\in E}T_{\mathrm{flip}}(\mathcal{G}_1,\mathcal{G}_2)\biggr),
\]
which can be written as
\[
O\biggl(\hspace{1mm}\sum_{\mathcal{G} \in V}T_{\mathrm{lp}}(n,r(\mathcal{G}))r(\mathcal{G}) + \sum_{(\mathcal{G}_1,\mathcal{G}_2) \in E}r(\mathcal{G}_1)n^2+ T_{\mathrm{lp}}(n,\mathcal{G}_1)+ \sum_{(\mathcal{G}_2,\mathcal{G}_1)\in E}T_{\mathrm{flip}}(\mathcal{G}_1,\mathcal{G}_2)\biggr),
\]
where $r(G)$ is the number of non-leading terms in the marked reduced Gr\"obner basis $G$, and $T_{\mathrm{lp}}$ is the time complexity related to finding an interior point to a cone. The first two terms are bounded by a polynomial in the size of the output. The computational complexity of the third term is NP-hard.
\end{theorem}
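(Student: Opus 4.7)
The plan is to follow the three-step reverse-search algorithm outlined in Section \ref{sec: GrobFan}, whose total cost decomposes additively into a sum over vertices of the graph $(V,E)$ of $State(I)$ (for the facet computation) and a sum over its edges (for the shoot and flip procedures). Since each vertex is visited exactly once and each edge is processed a bounded number of times during the reverse search, the first displayed expression follows directly from how the algorithm traverses $State(I)$ starting from a single known marked reduced Gr\"obner basis.

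Next I would refine each per-vertex and per-edge cost in turn. Computing the outer facet normals of the cone associated with a marked reduced Gr\"obner basis $\mathcal{G}$ reduces to solving $O(r(\mathcal{G}))$ linear programs in $n$ variables with $O(r(\mathcal{G}))$ constraints, yielding $T_{\mathrm{facets}}(\mathcal{G}) = T_{\mathrm{lp}}(n, r(\mathcal{G}))\, r(\mathcal{G})$. Selecting an admissible shoot edge amounts to testing $O(r(\mathcal{G}_1))$ candidate direction vectors, each test costing $O(n^2)$ arithmetic operations, followed by a single LP of cost $T_{\mathrm{lp}}(n,\mathcal{G}_1)$ to certify an interior point of the neighboring cone; this gives the shoot bound $r(\mathcal{G}_1)n^2 + T_{\mathrm{lp}}(n,\mathcal{G}_1)$. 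Substituting these refined expressions into the first sum produces the second displayed formula.

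For the polynomial-in-output bound on the first two sums, I would observe that $|V|$ and $|E|$ are themselves part of the output (they enumerate the cones of $GFan(I)$ and their shared facets), while the per-vertex and per-edge costs are polynomial in $n$ and in $r(\mathcal{G})$, both of which are in turn bounded by the total bit-length of the marked reduced Gr\"obner bases being produced. Since $T_{\mathrm{lp}}$ is known to be polynomial in its input size, the overall output-sensitive bound on the first two summands follows from standard LP complexity together with the polynomial-sized graph traversal.

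The main obstacle, as expected, is the flip term. The flip procedure rewrites a marked reduced Gr\"obner basis across a facet of the Gr\"obner fan, and performing it exactly embeds ideal-membership questions in the new initial ideal that are known to be NP-hard in general. Here I would invoke the reduction analysis collected in \cite{FJT} that turns the flip subroutine into a reduction target from an NP-complete problem, ruling out a polynomial-time flip unless $\mathrm{P}=\mathrm{NP}$. Summing the three bounds over $V$ and $E$ then assembles the full complexity statement in the theorem.
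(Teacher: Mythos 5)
Your reconstruction of the first two terms is in the right spirit and is consistent with the paper, which simply cites \cite[Theorem 5.1]{FJT} for both displayed formulas rather than re-deriving the facet and shoot costs from scratch. The genuine gap is in your treatment of the flip term. You claim the NP-hardness follows from a ``reduction analysis collected in \cite{FJT} that turns the flip subroutine into a reduction target from an NP-complete problem,'' but \cite{FJT} contains no such reduction; that reference only supplies the output-sensitive polynomial bound for the facet and shoot sums. Likewise, your assertion that the flip ``embeds ideal-membership questions in the new initial ideal that are known to be NP-hard in general'' is stated without any source and is not the argument used here.

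The paper's actual route is via tropical geometry. It cites \cite[Section 5]{J}, which notes that a tropical variety is a union of Gr\"obner regions and hence a subfan of $GFan(I)$, and then invokes \cite[Section 3]{T}, where several decision problems about tropical varieties are proved NP-hard. Since those problems could be answered once the Gr\"obner fan is in hand (\cite[Remark 3.4]{T}), the hardness is transferred to the fan computation — and, within the three-term decomposition, it is attributed to the flip sum because the first two sums are already known to be polynomial in output size. In other words, the hardness is not established by exhibiting a single hard flip but by showing that the completed output solves problems already known to be NP-hard. You should replace your \cite{FJT}-based claim with the \cite{J}/\cite{T} subfan argument.
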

\begin{proof}
The complexity description and bound for the first two terms can be found in \cite[Theorem 5.1]{FJT}. The statement about the third term can be found in \cite[Section 5]{J}, using a result from tropical geometry. More specifically, a tropical variety is a union of Gr\"obner regions, making it a subfan of a Gr\"obner fan. In \cite[Section 3]{T}, several decision problems related to tropical varieties (which could be solved given a Gr\"obner fan \cite[Remark 3.4]{T}) are shown to be NP-hard.
\end{proof}

\begin{example}\label{ex: Gfan}
Let $\mathrm{Det}_{t,m,n}$ denote the ideal in $\mathbb{K}[x_{11},\ldots, x_{mn}]$ generated by the $t \times t$ minors of the matrix:
\begin{align*}
\begin{pmatrix}
x_{11} & x_{12} & \ldots & x_{1n}\\
x_{21} & x_{22} & \ldots & x_{2n}\\
\vdots & \vdots & \ddots & \vdots \\
x_{m1} & x_{m2} & \ldots & x_{mn}
\end{pmatrix} .
\end{align*}     

Using the \texttt{Gfan} software package \cite{J2} of \texttt{Macaulay2}, the authors in \cite{FJT} were able to show that on a standard 2.4 GHz Pentium processor at the time (i.e. 2005), $\mathrm{Det}_{3,4,4}$ has $163,032$ many Gr\"obner regions in its Gr\"obner fan, and hence $163,032$ possible initial ideals. Note that $\mathrm{Det}_{3,4,4}$ is an ideal that initially is only generated by 16 polynomials in 16 variables. Without the use of the symmetries of $\mathrm{Det}_{3,4,4}$, the full computation took approximately 14 hours. Using the symmetries of $\mathrm{Det}_{3,4,4}$, the computation time for the full-dimensional cones took only 7 minutes. While a modern computer could do this considerably faster, the number of computations remains unchanged, so scaling $m$ and $n$ would produce similar results. \hfill $\square$

\end{example}

In light of the recent threats that quantum computers pose to traditional public-key cryptosystems, it becomes increasingly important to develop new techniques to secure data that can resist attacks from a quantum computer. Post-quantum cryptography is a relatively new field, but there are six main classes of algorithms which are considered to be resistant to quantum attacks \cite{B}. Therefore, to show that our protocol $\mathcal{P}$ is quantum-resistant, it suffices to demonstrate that it belongs to one of these categories. 

Since $State(I)$ has vertices in $\mathbb{Z}^n$ \cite[Chapter 2]{S}, it can be viewed as a convex polytope in the standard integer lattice $\mathbb{Z}^n$. Computing $GFan(I)$ directly is NP-hard by Theorem \ref{thm: Gfan}, and would generally require millions of simplified Gr\"obner basis calculations, in addition to polytope computations (like finding normal vectors to the facets, a Gr\"obner walk through the graph of $State(I)$, etc.). Therefore, these structures would be considered lattice-based primitives. Lattice-based cryptography is one of the previously mentioned approaches considered resistant against quantum attacks \cite{B}. To prove the security of $\mathcal{P}$, we will need to show that computing $GFan(I)$ is at least as difficult as a lattice-based problem which is considered quantum-resistant. Additionally, if $\tau$ is used, then it needs to be chosen to be quantum-resistant too.

\begin{theorem}\label{thm: secure}
Assume that lattice-based primitives secured by the Shortest Vector Problem (SVP) are quantum-resistant. If $\tau\neq \emptyset$, then also assume that $\tau$ is quantum-resistant. Then a protocol $\mathcal{P}$ for which $|\mathcal{U}_I|$ is sufficiently large is quantum-resistant. 
\end{theorem}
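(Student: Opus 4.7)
The plan is a two-case reduction argument. When $\tau \neq \emptyset$, the ciphertext is $(\mathcal{E}(K_B,X_B),\tau(K_B))$; a quantum adversary who recovers $K_B$ from $\tau(K_B)$ directly contradicts the quantum-resistance of $\tau$, so I may reduce to the case $\tau=\emptyset$. When $\tau=\emptyset$ the adversary sees only $\mathcal{E}(K_B,X_B)$ and must search over candidate keys. By Proposition \ref{prop: max keys}, unstructured enumeration over $\mathcal{M}_k$ costs $O(2^{k^n})$, which is infeasible for even modestly sized $n$ and $k$; a rational attacker must therefore first restrict the search to the true set of valid keys $\mathcal{K}_I$ obtained from minimal generating sets of initial ideals of $I$.

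Next I would identify this pruning step with the computation of $GFan(I)$. The vertices of $State(I)$ are in bijection with the maximal cones of $GFan(I)$, and each such cone determines exactly one initial monomial ideal of $I$ and hence one key in $\mathcal{K}_I$. Consequently, any procedure that produces $\mathcal{K}_I$ implicitly constructs $GFan(I)$. By Theorem \ref{thm: Gfan} this is NP-hard, and by Remark \ref{rmk: lattice} both $GFan(I)$ and $State(I)$ can be recast as objects in the integer lattice $\mathbb{Z}^n$, so the adversary is confronted with a genuine lattice-based problem. I would close the argument by invoking the tropical-geometry reductions cited in the proof of Theorem \ref{thm: Gfan}: since tropical subfans of $GFan(I)$ encode lattice-membership decision problems in $\mathbb{Z}^n$, any polynomial-time quantum algorithm that produces a maximal cone of $GFan(I)$ would yield a polynomial-time quantum algorithm for SVP in a sublattice of $\mathbb{Z}^n$, contradicting the standing hypothesis that SVP-based primitives are quantum-resistant.

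The hard part will be making the reduction from SVP both uniform and cryptographically tight. Theorem \ref{thm: Gfan} is a worst-case NP-hardness statement, whereas a meaningful quantum-resistance claim wants an average-case hardness flavour together with a cleanly parametrised security level. I expect this is handled by restricting attention to a structured family of ideals — most naturally the toric ideals of graphs developed in Section \ref{sec: initial} — whose Gr\"obner fans have a rich and controllable combinatorics. Within such a family the phrase ``$|\mathcal{U}_I|$ sufficiently large'' can be made quantitative by tying it to the rank of the embedded lattice, so that the sufficiency hypothesis in the theorem translates directly into a concrete security parameter for $\mathcal{P}$.
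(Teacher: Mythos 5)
Your handling of the $\tau\neq\emptyset$ case matches the paper's (both reduce to the quantum-resistance of $\tau$), and you correctly identify that the heart of the argument is tying Gr\"obner fan computation to a lattice problem. But the route you take to reach SVP has a genuine gap.

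Your central step is the assertion that, because the tropical decision problems underlying Theorem~\ref{thm: Gfan} are NP-hard and $GFan(I)$ lives in $\mathbb{Z}^n$, ``any polynomial-time quantum algorithm that produces a maximal cone of $GFan(I)$ would yield a polynomial-time quantum algorithm for SVP in a sublattice of $\mathbb{Z}^n$.'' This does not follow. NP-hardness of a tropical membership problem means some NP-complete problem reduces \emph{to} it; it does not give a reduction \emph{from} SVP to Gr\"obner fan computation, and SVP is not known to be NP-hard in the form that would let you shortcut this with ``they are both NP-hard lattice problems, so one reduces to the other.'' To invoke the hypothesis that SVP-based primitives are quantum-resistant, you must actually exhibit an SVP instance that becomes tractable once $GFan(I)$ is known. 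The paper does exactly this: it poses the question of which Gr\"obner region has the interior lattice point closest to the origin, shows that knowledge of $GFan(I)$ (hence of the cone generators, and hence of the simplicial sums $v_1+\cdots+v_n$ or their integer-LP analogues in the non-simplicial case) answers this in polynomial time, and then observes that the answer coincides with the shortest vector in the lattice generated by appropriately scaled cone generators $\mathcal{C}$. That last identification is the explicit SVP instance your argument is missing. Without a construction of this kind, the jump from ``NP-hard via tropical geometry'' to ``at least as hard as SVP'' is unjustified.

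Your closing paragraph correctly flags the worst-case versus average-case tension and the need to make ``$|\mathcal{U}_I|$ sufficiently large'' quantitative; the paper does not address these points in its proof either, so that observation is a fair critique of the theorem as stated rather than a flaw in your own argument. But the missing explicit SVP reduction is the step you would need to supply to make the proof go through.
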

\begin{proof}
We need to demonstrate that computing a Gr\"obner fan is as difficult as solving a certain SVP. To do this, we will show that when properly rephrased, knowing  $\mathcal{U}_I$ will provide a solution to a certain SVP. 

Recall that $GFan(I)$ is a union of cones, called Gr\"obner regions, and in the interior of each cone $GR_<(I)$, there is at least one integer lattice point $x \in GR_<(I)\cap\mathbb{Z}^n$ such that $|x|\leq |y|$ for all $y\in \mathrm{int}(GR_<(I)\cap\mathbb{Z}^n)$ where $|\cdot|$ is the usual Euclidean norm on $\mathbb{R}^n$. Consider the following question: Among all Gr\"obner regions $GR_<(I)\subset GFan(I)$, which one has an interior integer lattice point $x\in GR_<(I)\cap\mathbb{Z}^n$ which is closest to the origin?

Suppose that $GFan(I)$ is known. Then the normal vectors to facets of $State(I)$ have been computed, and therefore lattice generators for each rational polyhedral cone $GR_<(I)\cap\mathbb{Z}^n$ are known. For simplicial cones, given integer vectors $v_1,\ldots, v_n$ that generate the cone $GR_<(I)\cap\mathbb{Z}^n$, the interior points have the form $c_1v_1+\ldots+c_nv_n$ with $c_i>0$ and $c_i\in \mathbb{Z}$. The interior lattice point with the closest distance is precisely $v_1+\ldots+v_n$ (recall that the cones computed in the algorithm are contained in the positive orthant \cite[Definition 2.8]{FJT}). The non-simplicial case can be computed using integer linear programming techniques for rational convex polytopes, which runs in polynomial time for a fixed dimension \cite{DLHTY}. In particular, $GFan(I)$ has more information than what is needed to answer the question posed in the last paragraph, and the question can be answered in polynomial time.

On the other hand, given $GFan(I)$, we could take the collection of all appropriately scaled lattice vectors generating all Gr\"obner regions, and select some minimal generating set $\mathcal{C}$ from this collection. Note that the shortest vector that answers the above question is a rational combination of some of these generators. By scaling, we can assume that the rational combination yields an integer combination. The (now scaled) vectors in $\mathcal{C}$ generate a lattice, and we can ask what the shortest vector is in that lattice, which is a specific independent SVP. The solution to this SVP is exactly equal to the vector computed in the previous paragraph, showing that computing $GFan(I)$ is at least as difficult as an SVP. 

Finally, when $\tau\neq\emptyset$, an attacker could try to compute $\tau^{-1}(\tau(K_B))$ directly, circumventing the security that $GFan(I)$ provides, so $\tau$ needs to be a hash function which is resistant to quantum attacks.
\end{proof}

\begin{remark}
There are additional ways in which computing $GFan(I)$ emulates the spirit of other lattice-based problems used in post-quantum security. For example, given a vertex $\mathbf{v}\in\mathbb{Z}^n$ of $State(I)$, as a face of the polytope, is characterized by the inequality $w\cdot\mathbf{v} > w\cdot \mathbf{u}$ for all other $\mathbf{u}\in State(I)$, where $w$ is the weight order associated to the vertex $v$. There is another vertex $v'$ of $State(I)$ which has the furthest distance from $v$, and is precisely the point of $State(I)$ which minimizes the functional $f(P)=w\cdot P$ over $State(I)$. 

Many algebraic problems haven't been studied in the context of post-quantum cryptography, so many of the established lattice-based problems that have been formulated do not immediately translate to algebraic settings. Further research is needed to establish independent algebraic or combinatorial problems which are considered quantum resistant.
\end{remark}

\subsection{Other complexities associated with $\mathcal{P}$} \label{sec: complexity}

We will start with the complexity of the one-time computation of $\mathcal{R}_{I}$ by Party $A$. There are numerous ways to trim the ideal $I$, given that $\mathcal{U}_I$ is already known. One possibility is to choose some monomial order $<$ of $\mathbb{K}[x_1,\ldots,x_n]$ and compute a Gr\"obner basis for $I$ using $\mathcal{U}_I$. This also generates the ideal $I$, and generally involves far fewer than $|\mathcal{U}_I|$ many generators.

\begin{theorem}\label{thm: trim ideal}%% That is, it is easy for Alice to trim the ideal
Let $\mathcal{U}_I$ be a universal Gr\"obner basis for an ideal $I\subset R=\mathbb{K}[x_1,\ldots,x_n]$ and $<$ some monomial order on $R$. Then a Gr\"obner basis for $I$ with respect to $<$ is one possible choice of $\mathcal{R}_I$. Furthermore, if $|\mathcal{U}_I|=N$, then $\mathcal{R}_I$ can be computed with $O(N)$ many operations. 
\end{theorem}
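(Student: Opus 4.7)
The plan is to exploit the defining property of a universal Gröbner basis to make both claims essentially immediate. Since $\mathcal{U}_I$ is, by definition, a Gröbner basis with respect to \emph{every} monomial order on $R$, it is in particular a Gröbner basis with respect to the given $<$. The first assertion then follows at once: any Gröbner basis generates its ideal, so it is a valid choice of the generating set $\mathcal{R}_I$. At one extreme, one may simply take $\mathcal{R}_I = \mathcal{U}_I$, which already meets the requirement; in practice one wants a smaller set, which is produced by trimming.

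For the complexity claim, the key observation is that no Buchberger-style completion is required: we need only select a Gröbner basis from within $\mathcal{U}_I$. Concretely, I would describe the standard trimming procedure in two passes over $\mathcal{U}_I$. The first pass records the leading monomial $\init_<(g)$ of each $g \in \mathcal{U}_I$. The second pass retains exactly those $g$ whose leading monomial is not strictly divisible by the leading monomial of any other element, and rescales the surviving leading coefficients to $1$. Correctness is a routine consequence of standard Gröbner basis theory (cf.\ \cite{CLO}), with Dickson's Lemma (Lemma~\ref{lem: unique monomial}) ensuring that the minimal generating set of $\init_<(I) = \langle \init_<(g) : g \in \mathcal{U}_I\rangle$ is well defined and hence the trimmed list is determined up to the normalization of coefficients.

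The main obstacle is justifying the $O(N)$ count rather than the naive $O(N^2)$ one gets from pairwise divisibility checks. I would address this by noting that $n$ is a fixed protocol parameter, so monomial divisibility on length-$n$ exponent vectors is an atomic operation; with a radix/trie indexing of initial monomials (or an equivalent hash-based lookup), the minimization completes in a bounded number of primitive operations per element, giving $O(N)$ in total. More importantly, no polynomial arithmetic, S-polynomial computation, or reduction step is required — the entire procedure reduces to initial-term extraction and monomial bookkeeping, and the number of such high-level operations is linear in $N = |\mathcal{U}_I|$.
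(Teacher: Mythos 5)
Your first claim matches the paper exactly: a universal Gröbner basis is by definition a Gröbner basis with respect to any $<$, so $\mathcal{U}_I$ or any reduced subset of it serves as $\mathcal{R}_I$. The paper's proof of the $O(N)$ claim is terse — it says only to mark the initial term of each of the $N$ elements of $\mathcal{U}_I$ and then eliminate redundancies — and does not address the point you raise, namely that naive redundancy elimination via pairwise divisibility checks costs $O(N^2)$ rather than $O(N)$. Noticing that gap is a genuine improvement on the paper's argument.

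However, your proposed fix does not actually close it. A radix trie or hash table over exponent vectors gives constant-time \emph{exact-match} lookup, but the relevant query during minimization is a \emph{dominance} (divisibility) query: given $\mathbf{x}^\alpha$, is there some $\mathbf{x}^\beta$ already in the set with $\beta_i \le \alpha_i$ for all $i$? A trie does not answer this in constant time — in the worst case one must traverse many branches, and hashing gives no help at all, since $\beta$ is not known in advance. In computational-geometry terms this is the "staircase"/maxima problem in $\mathbb{Z}_{\ge 0}^n$, and the best general bounds for fixed $n$ are of order $N\log^{n-2}N$, not $O(N)$. So the $O(N)$ claim for computing a \emph{minimal} Gröbner basis is not justified by your argument any more than by the paper's. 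The claim is defensible only under the weaker reading that $\mathcal{R}_I$ need only be \emph{some} Gröbner basis with marked leading terms — in which case the $N$ initial-term extractions alone already suffice and no minimization is required — but if minimality is demanded, both your proof and the paper's leave the complexity statement unproved.
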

\begin{proof}
Since $\mathcal{U}_I$ is a universal Gr\"obner basis, selecting Gr\"obner generators of $I$ for a fixed $<$ is a simple procedure which involves marking the initial terms of each of the $N$ many elements of $\mathcal{U}_I$, and then eliminating redundancies and any elements whose initial terms are not needed to generate $\init_<(I)$.  
\end{proof}

Next we shift to the complexity of Party $B$ finding a Gr\"obner basis of $I$. In not having access to $\mathcal{U}_I$, $B$ will need to compute a Gr\"obner basis directly. The next theorem provides the complexity of this computation.

\begin{theorem}\label{thm: GB complexity}
\cite[Proposition 1]{BFS}    
Let $(f_1, \ldots, f_m)$ be  a system of homogeneous polynomials in $\mathbb{K}[x_1, \ldots, x_n]$ with $\mathbb{K}$ an arbitrary field. The number of operations in $\mathbb{K}$ required to compute a Gr\"obner basis of the ideal $I$ generated by $(f_1, \ldots, f_m)$ for a graded monomial ordering up to degree $D$ is bounded by 
\begin{align*}
O \biggl(mD \binom{n+D-1}{D}^\omega \biggr), \text{ as } D \rightarrow \infty
\end{align*}
where $\omega$ is the exponent of matrix multiplication over $\mathbb{K}$.
 
\end{theorem}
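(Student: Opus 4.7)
The plan is to invoke the Macaulay matrix framework that underlies F4/F5-type Gr\"obner basis algorithms, bound the arithmetic cost of row reduction degree by degree, and then sum the estimates from degree $0$ up to degree $D$. Since the input system $(f_1,\ldots,f_m)$ is homogeneous and we are using a graded monomial order, the degree-$d$ component $I_d$ can be analyzed independently for each $d$, which is what makes this accounting tight.

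First, for each fixed $d \leq D$, I would assemble the Macaulay-style matrix $M_d$ whose rows are the products $t\cdot f_i$ for each generator $f_i$ and each monomial $t$ of degree $d - \deg f_i$, and whose columns are indexed by the monomials of $\mathbb{K}[x_1,\ldots,x_n]$ of degree exactly $d$. The column count is $\dim_{\mathbb{K}}\mathbb{K}[x_1,\ldots,x_n]_d = \binom{n+d-1}{d}$, and the row count is bounded crudely by $m\cdot \binom{n+d-1}{d}$. Row reduction of $M_d$ to echelon form, carried out with respect to the monomial order, identifies exactly the leading monomials of the degree-$d$ part of $I$, and the resulting pivots are precisely the degree-$d$ elements of a Gr\"obner basis. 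Using fast linear algebra, Gaussian elimination of an $r\times c$ matrix with $r\geq c$ costs $O(r\,c^{\omega-1})$ operations in $\mathbb{K}$, so for $M_d$ this gives
\[
O\!\left(m\cdot \binom{n+d-1}{d}^{\omega}\right).
\]

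Second, I would sum over the degrees:
\[
\sum_{d=0}^{D} O\!\left(m\,\binom{n+d-1}{d}^{\omega}\right).
\]
Since $\binom{n+d-1}{d}$ is monotonically non-decreasing in $d$ for fixed $n$, the final term dominates, so the entire sum is bounded by $D$ copies of the largest term, yielding $O\!\bigl(mD\,\binom{n+D-1}{D}^{\omega}\bigr)$ as $D\to\infty$, which is the stated bound.

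The main obstacle is the correctness claim, not the counting. One must argue that a single echelon reduction in each degree, processed in increasing order of $d$, actually produces a Gr\"obner basis truncated at degree $D$, i.e.\ no essential S-polynomial reductions are missed. For homogeneous systems with a graded order, any S-polynomial formed from degree-$d$ pivots either reduces to zero modulo the already computed pivots or contributes a new pivot in some degree $d' \geq d$, so the incremental degree-by-degree sweep is complete; formally, this is the content of the F5 criterion (or equivalently an argument using the degree of regularity). Once that structural statement is in place, the complexity claim reduces to the linear-algebra accounting above.
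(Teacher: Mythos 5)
The paper does not prove this statement itself; it cites it as \cite[Proposition 1]{BFS}, so there is no internal proof to compare against. Your reconstruction via Macaulay matrices, with the degree-by-degree echelon reduction and the $O(rc^{\omega-1})$ row-reduction cost, is precisely the linear-algebraic argument underlying the cited bound, and the row and column counts, as well as the final summation over $d \leq D$ using the monotonicity of $\binom{n+d-1}{d}$, are all in order. The one loose point is the appeal to the F5 criterion for the correctness of the degree-by-degree sweep: what is actually needed is the more elementary observation (Lazard's approach) that for a homogeneous ideal $I$ and a graded order, $I_d$ is exactly the row space of $M_d$, so the echelon form of $M_d$ already exhibits every leading monomial of $\init_<(I)$ in degree $d$, and the truncated Gr\"obner basis is obtained by discarding pivots whose leading monomials are divisible by those of lower degree. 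The F5 criterion addresses which rows of $M_d$ are redundant (i.e.\ efficiency, avoiding reductions to zero), not completeness, and in fact a complexity estimate based on the full Macaulay matrix does not rely on it at all. With that substitution the argument is correct and matches the standard proof of the cited result.
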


Recall that the exponent of matrix multiplication is a constant $\omega$, depending on $n$, which is used to bound the complexity of matrix multiplication of $n\times n$ matrices. Suppose that Party $A$ has already chosen a preset number of variables, so that $n$ is fixed, as well as the number of elements in a generating set (which is $m$). If we allow the degrees of the $f_i$'s to vary, then we are allowing the degree of the polynomials used in the Gr\"obner basis computation to vary, and thus the maximum degree of the Gr\"obner basis elements (i.e. the variable $D$) may also vary. Then, as a function of $D$,

\begin{align*}
\binom{n+D-1}{D} &= \frac{(n+D-1)(n+D-2)\cdots (D+1)D!}{D!(n-1)!}\\
&= \frac{(n+D-1)(n+D-2) \cdots (D+1)}{(n-1)!}\\
\end{align*}
which is a polynomial of degree $n-1$ in $D$. Therefore, 

\begin{align*}
mD \binom{n+D-1}{D}^\omega &= mD\left(\frac{(n+D-1)(n+D-2)\cdots (D+1)D!}{D!(n-1)!}\right)^\omega\\
&= O(D^{\omega(n-1)+1})
\end{align*}

Even if $A$ changes the complexity of a protocol $\mathcal{P}$ based on the degree of the generators considered (e.g. changes $D$ to $D+1$), it only affects $B$'s Gr\"obner basis computation polynomially.

\begin{corollary}\label{cor: GB}
With the same notation as Theorem \ref{thm: GB complexity}, let $n$ and $m$ be fixed positive integers. Then the number of operations in $\mathbb{K}$ required to compute a Gr\"obner basis of the ideal $I$ is polynomial in $D$. More precisely, it is bounded by \[ O(D^{\omega(n-1)+1}) , \text{ as } D \rightarrow \infty\]
\end{corollary}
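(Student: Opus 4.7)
The plan is to directly apply Theorem \ref{thm: GB complexity} and then perform an asymptotic estimate of the binomial coefficient appearing in the bound, treating $n$ and $m$ as constants while letting $D \to \infty$. Since the excerpt immediately preceding the corollary already carries out the bulk of the computation in the course of motivating the statement, the proof is essentially a matter of packaging this observation cleanly.

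First, I would invoke Theorem \ref{thm: GB complexity} to obtain the bound
\[
O\biggl(mD \binom{n+D-1}{D}^{\omega}\biggr)
\]
on the number of field operations, valid as $D \to \infty$. The next step is to analyze the binomial coefficient as a function of $D$ alone. Writing
\[
\binom{n+D-1}{D} = \frac{(n+D-1)(n+D-2)\cdots(D+1)}{(n-1)!},
\]
the numerator is a product of $n-1$ linear factors in $D$, so the whole expression is a polynomial in $D$ of degree exactly $n-1$ with leading coefficient $1/(n-1)!$. Hence $\binom{n+D-1}{D} = O(D^{n-1})$, and raising to the $\omega$-th power gives $\binom{n+D-1}{D}^{\omega} = O(D^{\omega(n-1)})$.

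Finally, since $m$ is a fixed constant, multiplying by the factor $mD$ yields
\[
mD \binom{n+D-1}{D}^{\omega} = O(D \cdot D^{\omega(n-1)}) = O(D^{\omega(n-1)+1}),
\]
which is the stated bound. There is no substantial obstacle here; the only subtlety worth being careful about is that the big-$O$ in Theorem \ref{thm: GB complexity} is asymptotic as $D \to \infty$ with $n$ and $m$ varying, whereas here we must fix $n$ and $m$ before extracting the polynomial-in-$D$ growth. Once this is made explicit, the bound and the conclusion that the complexity is polynomial in $D$ follow immediately.
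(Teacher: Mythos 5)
Your proposal is correct and follows exactly the same route as the paper: the paper carries out precisely this expansion of $\binom{n+D-1}{D}$ as a degree-$(n-1)$ polynomial in $D$ (in the paragraph immediately preceding the corollary) and then absorbs the factor $mD$ to obtain $O(D^{\omega(n-1)+1})$. Nothing is missing.
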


\begin{remark}
By a theorem of Dub\'e, the maximum degree $D$ of polynomials appearing in a Gr\"obner basis is bounded by $(d^2+2d)^{2^{n-1}}$ where $d$ is the maximum degree of the polynomials in $\{f_1,\ldots,f_m\}$ (see \cite{D}). Note that this bound grows very large even for small $n$ and $d$, but in our case, Party $B$ is not computing a random Gr\"obner basis for a random ideal. On the contrary, the maximum degree of an element in $\mathcal{U}_I$ is an upper bound for the $D$ that Party $B$ would encounter. 
\end{remark}

On the other hand, since $\omega$ is dependent on $n$, the complexity of computing a Gr\"obner basis is doubly exponential in $n$. This complexity assumes the worst case however. It is expected that only a mild increase in the complexity will result from adding new generators with a similar structure to the current $f_i$ (for instance, adding binomial generators). In summary, Theorem \ref{thm: GB complexity} tells us that:

\begin{itemize}
\item An increase in the maximum degree of the generators $f_1,\ldots, f_m$ results in a polynomial increase in time for $B$. 
\item An increase in the number of generators $m$ for fixed $D$ and $n$ also results in a polynomial increase in time for $B$ (with complexity $O(m)$). 
\item An increase in $n$ could result in a doubly exponential jump in the worst case for Party $B$'s computation time \cite[Section 5]{J}, so care in selecting $I$ should be taken. 
\end{itemize}

We conclude this section with a brief statement about the maximum time needed for Party $A$ to decrypt a message when $\tau=\emptyset$.

\begin{proposition} \label{prop: decrypt}
Let $\mathcal{P}= (n,\mathbb{K}, \mathcal{U}_I, \mathcal{R}_I, <_B, \eta, \mathcal{E},\tau)$. Suppose that $N$ is the number of Gr\"obner regions in the Gr\"obner fan $GFan(I)$, and let $C(\mathcal{E})$ be the maximum amount of time needed to decrypt a ciphertext (of bounded length) using $\mathcal{E}$. If $\tau=\emptyset$, then the maximum amount of time needed to decrypt a message (of bounded length) sent using $\mathcal{P}$ is $N\cdot C(\mathcal{E})$. 
\end{proposition}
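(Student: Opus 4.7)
The plan is to bound the decryption time by counting the keys Party $A$ must test in the worst case. First I would recall from Section~\ref{sec: GrobFan} that the Gr\"obner regions of $GFan(I)$ are in bijection with the distinct initial ideals $\init_<(I)$, and that each such initial ideal admits a unique minimal monomial generating set by Lemma~\ref{lem: unique monomial}. Consequently, applying the hash function $\eta$ to these unique minimal generating sets yields at most $N$ distinct candidate keys, one for each Gr\"obner region.

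Next I would observe that Party $A$, being in possession of $\mathcal{U}_I$, can enumerate exactly these $N$ initial ideals: for each monomial order $<$ (or equivalently, each maximal cone of $GFan(I)$), the reduced Gr\"obner basis with respect to $<$ can be extracted from $\mathcal{U}_I$ as in Theorem~\ref{thm: trim ideal}, and its marked initial terms recover the minimal generating set of $\init_<(I)$. Applying $\eta$ to each such set produces the full list $\mathcal{K}_I$ of candidate keys, with $|\mathcal{K}_I| \leq N$. Since $K_B$ must be one of these, the key used by Party $B$ lies in this list.

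With $\tau=\emptyset$, Party $A$ proceeds by brute force, attempting $\mathcal{E}^{-1}(\kappa,\mathcal{E}(K_B,X_B))$ for each $\kappa\in\mathcal{K}_I$ until the correct plaintext is recovered (detected, as noted in the remark after Definition~\ref{def: par}, by an embedded marker from $\mathcal{E}$). By assumption, each such decryption attempt on a ciphertext of bounded length requires at most $C(\mathcal{E})$ time. The worst case occurs when $K_B$ is the last key tested, giving a total time of at most $|\mathcal{K}_I|\cdot C(\mathcal{E}) \leq N\cdot C(\mathcal{E})$.

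There is no real obstacle here beyond bookkeeping; the argument is essentially a pigeonhole count once the correspondence between Gr\"obner regions and candidate keys is made explicit. The only subtle point is that $\eta$ could in principle collide on two distinct initial ideals, but this only reduces the effective number of distinct keys below $N$, so the upper bound $N\cdot C(\mathcal{E})$ remains valid.
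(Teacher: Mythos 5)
The paper states Proposition~\ref{prop: decrypt} without an explicit proof (it is followed immediately by an example), treating the claim as an evident consequence of the decryption procedure in step~(IV) of Section~\ref{sec: parameters}. Your argument is correct and supplies precisely the implicit reasoning: the Gr\"obner regions of $GFan(I)$ biject with the distinct initial ideals, Lemma~\ref{lem: unique monomial} gives a canonical minimal generating set for each, so $\eta$ produces at most $N$ candidate keys, and a brute-force pass with $\tau=\emptyset$ costs at most $N\cdot C(\mathcal{E})$. Your remark that collisions of $\eta$ only shrink the effective key list, leaving the bound intact, is a correct and worthwhile observation. One minor point: a fully pedantic account would also note that the one-time cost of enumerating $\mathcal{K}_I$ and applying $\eta$ is not counted in $N\cdot C(\mathcal{E})$, since the proposition is bounding only the per-message decryption loop; your proof implicitly treats this as precomputed, which is consistent with how the paper uses the bound.
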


\begin{example}
Let us continue with Example \ref{ex: Gfan} when $\tau=\emptyset$. We will let $\mathcal{E}$ be the RSA encryption/decryption scheme. Suppose that the decryption time using $\mathcal{E}$ is approximately  $C(\mathcal{E}) \sim 0.00055$ seconds on a standard laptop. For this example, Party $A$ would have to check at most $N=163,032$ keys, taking at most 90 seconds to decrypt the message. Having partial information from Party $B$ in the message about the key $K_B$ would reduce this time significantly and may be needed to bring down the decryption time to a more reasonable number, for practical purposes.

It is worth noting that in this example, Party $B$ would only take about 0.3 seconds to compute a single Gr\"obner basis for a fixed order $<_B$ (found by dividing the 14 hours needed to compute the Gr\"obner fan by the number of Gr\"obner regions). \hfill $\square$
\end{example}

\begin{remark}
When $\tau=\emptyset$, the protocol $\mathcal{P}$ offers greater security  since no information about $K_B$ is sent publicly, but this comes at a cost to Party $A$ who now needs to iterate through all possible encryption keys. Choosing a symmetric $\mathcal{E}$ can help reduce this time. On the other hand, if $\tau\neq \emptyset$, then Party $A$ can decrypt a ciphertext very quickly, at the cost of information about $K_B$ being public using $\tau$. In this case, the security of the system is also dependent on the security of $\tau$. 
\end{remark}
\section{Effective Initialization of $\mathcal{U}_I$} \label{sec: initial}

In the last section, we saw that the problem of an attacker trying to compute $\mathcal{U}_I$ directly is intractable if the set is sufficiently large, leading one to wonder how it is possible for Party $A$ to even initialize the system $\mathcal{P}$. Here we will introduce a way to easily compute examples of $\mathcal{U}_I$ using toric ideals of graphs. The knowledge of which graph was used in the construction would provide a trapdoor to an attacker computing $\mathcal{U}_I$, so it is imperative that this information be kept private. The content of this section is also of independent interest to combinatorial algebraists, especially those working with toric ideals of graphs and geometric vertex decomposition (see \cite{DLST} for example).

\subsection{The toric ideal of a graph}

Let $G=(V(G), E(G))$ be a finite simple graph where $V(G) = \{v_1, \ldots, v_n\}$ is the set of vertices of $G$, and $E(G) = \{e_1, \ldots, e_r\}$ is the set of edges of $G$ with $e_i = \{v_{i_j},v_{i_k}\}$ an unordered pair of vertices which we call the endpoints of $e_i$. Given $G$, we can associate an ideal $I_G$ to it. Let $\mathbb{K}[E(G)] = \mathbb{K}[e_1, \ldots, e_r]$ and $\mathbb{K}[V(G)] = \mathbb{K}[v_1, \ldots, v_n]$ be two polynomial rings over $\mathbb{K}$ with the edges and vertices viewed as indeterminates, respectively. Then consider the $\mathbb{K}$-algebra homomorphism,
\begin{align*}
\varphi_G : \mathbb{K}[E(G)] \rightarrow \mathbb{K}[V(G)]
\end{align*}
defined on the indeterminates $e_i$ by $\varphi_G(e_i)=v_{i_j}v_{i_k}$ where $e_i = \{v_{i_j},v_{i_k}\}$ for all $i \in \{1, \ldots, r\}$. The kernel of the map $\varphi_G$ will be denoted by $I_G$ and is called the \emph{toric ideal of the graph $G$}.

There is a convenient graph-theoretic description of the elements of $I_G$. First, recall that a \emph{walk} of length $k$ in a graph $G$ is an alternating sequence of vertices and edges \[W = (v_{i_0},e_{i_1}, v_{i_1}, e_{i_2}, v_{i_2}, \ldots, v_{i_{k-1}}, e_{i_k}, v_{i_k}) \] where $e_{i_j} = \{v_{i_{j-1}},v_{i_{j}}\}$ for $j=1, \ldots, k$. We say that the walk is even if $k$ is even, and closed if $v_{i_k}=v_{i_0}$. We can associate a binomial in $\mathbb{K}[E(G)]$ to $W$ by $e_{i_1}e_{i_3} \cdots e_{i_{k-1}}-e_{i_2}e_{i_4} \cdots e_{i_{k}}$. In general, all binomials associated to closed even walks of $G$ are in $I_G$. It turns out that these binomials generate $I_G$. 

\begin{theorem} \cite[Proposition 10.1.5]{V}
Let $G$ be a finite simple graph. Then the toric ideal $I_G$ of $G$ is generated by the set of binomials 
\begin{align*}
\{e_{i_1}e_{i_3} \cdots e_{i_{k-1}}-e_{i_2}e_{i_4} \cdots e_{i_{k}} | (e_{i_1}, \ldots, e_{i_{k}}) \text{ is a closed even walk of } G \}.
\end{align*}
\end{theorem}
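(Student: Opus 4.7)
The plan is to establish both inclusions in turn. The inclusion $\supseteq$ is a direct calculation: for any closed even walk $W = (v_{i_0},e_{i_1},v_{i_1},\ldots,e_{i_k},v_{i_k})$ with $v_{i_0}=v_{i_k}$ and $k$ even, the images $\varphi_G(e_{i_1}e_{i_3}\cdots e_{i_{k-1}})$ and $\varphi_G(e_{i_2}e_{i_4}\cdots e_{i_k})$ agree as monomials in $\mathbb{K}[V(G)]$, because each vertex of $W$ occurs equally often as an endpoint of the odd-indexed edges and of the even-indexed edges (an easy consequence of walk closure together with the alternation of vertex indices along $W$). Hence the associated binomial lies in $\ker(\varphi_G) = I_G$.

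For the reverse inclusion $\subseteq$, the first step is to reduce to pure difference binomials. Because $\varphi_G$ is a monomial $\mathbb{K}$-algebra homomorphism between two polynomial rings, a standard result in toric ideal theory guarantees that $I_G$ is generated as a $\mathbb{K}$-vector space, and hence as an ideal, by the pure differences $\mathbf{e}^{\alpha} - \mathbf{e}^{\beta}$ satisfying $\varphi_G(\mathbf{e}^{\alpha}) = \varphi_G(\mathbf{e}^{\beta})$. So it suffices to show that each such binomial lies in the ideal generated by the closed-even-walk binomials.

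Fix such a pure difference $\mathbf{e}^{\alpha} - \mathbf{e}^{\beta} \in I_G$ with $\alpha \neq \beta$, and interpret the exponent vectors as edge multisets $M_\alpha$ and $M_\beta$ in $G$. After pulling out any common factors, we may assume $M_\alpha \cap M_\beta = \emptyset$. The equality $\varphi_G(\mathbf{e}^{\alpha}) = \varphi_G(\mathbf{e}^{\beta})$ then asserts that in the auxiliary edge-colored multigraph $H$ on $V(G)$ with edge set $M_\alpha \sqcup M_\beta$ (colored by which multiset each edge came from), the number of $\alpha$-incidences equals the number of $\beta$-incidences at every vertex. Starting from any edge of $H$ and repeatedly alternating between $\alpha$- and $\beta$-colored edges (which is always possible by the balance condition at each vertex), one extracts a closed walk $W$ in $G$ of even length that alternates strictly between $M_\alpha$- and $M_\beta$-edges. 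Its binomial $B_W = \mathbf{e}^{\alpha_0} - \mathbf{e}^{\beta_0}$ satisfies $\mathbf{e}^{\alpha_0} \mid \mathbf{e}^{\alpha}$ and $\mathbf{e}^{\beta_0} \mid \mathbf{e}^{\beta}$, so writing $\mathbf{e}^{\alpha} = \mathbf{e}^{\gamma} \mathbf{e}^{\alpha_0}$ and $\mathbf{e}^{\beta} = \mathbf{e}^{\gamma'} \mathbf{e}^{\beta_0}$ gives
\[ \mathbf{e}^{\alpha} - \mathbf{e}^{\beta} \;=\; \mathbf{e}^{\gamma} B_W \;+\; \mathbf{e}^{\beta_0}\bigl(\mathbf{e}^{\gamma} - \mathbf{e}^{\gamma'}\bigr), \]
and the binomial $\mathbf{e}^{\gamma} - \mathbf{e}^{\gamma'}$ lies in $I_G$ (since $\varphi_G(\mathbf{e}^{\alpha_0}) = \varphi_G(\mathbf{e}^{\beta_0})$ forces $\varphi_G(\mathbf{e}^{\gamma}) = \varphi_G(\mathbf{e}^{\gamma'})$) and has strictly smaller total degree than $\mathbf{e}^{\alpha} - \mathbf{e}^{\beta}$.

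Induction on $\deg(\mathbf{e}^{\alpha}) + \deg(\mathbf{e}^{\beta})$ then closes the argument: the base case is immediate, and the inductive hypothesis applied to $\mathbf{e}^{\gamma} - \mathbf{e}^{\gamma'}$ writes it in the ideal generated by walk binomials, whence so too is $\mathbf{e}^{\alpha} - \mathbf{e}^{\beta}$. The step I expect to require the most care is the combinatorial extraction of the alternating closed walk from the vertex-balance condition on $H$. This is essentially an Eulerian-style argument on the edge-colored multigraph $H$, where one must simultaneously maintain walk closure and strict color alternation while handling parallel edges in $H$ arising from repeated edges in the multisets. Once this combinatorial lemma is in hand, the algebraic reduction and the induction are formal.
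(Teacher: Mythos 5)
The paper does not prove this statement; it cites \cite[Proposition~10.1.5]{V}, so there is no in-paper argument to compare against. Your proof is correct and follows the standard route found in the toric-ideal literature: reduce to pure difference binomials via the fact that the kernel of a monomial map between polynomial rings is generated by such binomials, extract an alternating closed even walk from the balance condition, and induct on total degree. The algebraic identity and the cancellation argument showing $\varphi_G(\mathbf{e}^{\gamma}) = \varphi_G(\mathbf{e}^{\gamma'})$ (using that $\mathbb{K}[V(G)]$ is a domain) are both in order, and the degree drop is strict since the extracted walk is nontrivial.

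You are right that the combinatorial extraction is where the care lies, and as stated (``repeatedly alternating \ldots which is always possible by the balance condition'') it is a hand-wave rather than a proof, because a greedy trail-builder could in principle get stuck. The fix is a short degree count. Grow a trail in $H$ never reusing an edge, alternating colors, starting with an $\alpha$-edge from a vertex $v_0$. When you arrive at a vertex $v \neq v_0$ via an $\alpha$-edge, each earlier complete pass through $v$ consumed one $\alpha$- and one $\beta$-incidence (alternation), so the number of used $\beta$-incidences at $v$ is one less than the number of used $\alpha$-incidences; since $\deg_{\alpha}(v) = \deg_{\beta}(v)$, an unused $\beta$-edge remains. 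The symmetric count handles arriving via $\beta$ at $v \neq v_0$, and at $v = v_0$ the initial departing $\alpha$-edge shifts the count by one in your favor. The only place the trail can terminate is upon arriving at $v_0$ via a $\beta$-edge, which by finiteness must eventually happen, and at that moment you have exactly a closed even alternating walk in $G$ with $\mathbf{e}^{\alpha_0} \mid \mathbf{e}^{\alpha}$ and $\mathbf{e}^{\beta_0} \mid \mathbf{e}^{\beta}$. (An equivalent and perhaps cleaner packaging: fix at each vertex a bijection between its $\alpha$- and $\beta$-incidences; this transition system decomposes $E(H)$ into disjoint alternating closed trails, and any one of them serves.) With that lemma pinned down, your induction closes the argument.
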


There are generally infinitely many closed even walks of a graph $G$. To achieve a finite generating set, we consider only \emph{primitive closed even walks}.

\begin{definition}
Let $e^A := e_1^{A_1}\cdots e_r^{A_r}$. A binomial $e^\alpha - e^\beta \in I_G$ is called \emph{primitive} if there is no other binomial $e^\gamma - e^\delta \in I_G$ such that $e^\gamma | e^\alpha$ and $e^\delta | e^\beta$. 
\end{definition}

Not only do the set of primitive closed even walks generate the ideal $I_G$, they are also a universal Gr\"obner basis of $I_G$.

\begin{theorem} \cite[Proposition 10.1.9]{V}
Let $G$ be a finite simple graph. Then the set of all primitive binomials of $I_G$ define a universal Gr\"obner basis of $I_G$, denoted by $\mathcal{U}(I_G)$.  
\end{theorem}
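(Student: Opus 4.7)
The plan is to combine the generating property from the previous theorem with a conformal refinement argument that carefully preserves the monomial-order direction. Fix a monomial order $<$ on $\mathbb{K}[E(G)]$. Because $I_G$ is a binomial ideal, a standard reduction shows its initial ideal is generated by the initial terms of its binomials; hence it suffices to prove that for every binomial $f = e^\alpha - e^\beta \in I_G$ with $\alpha,\beta$ of disjoint support and $\init_<(f) = e^\alpha$, there exists a primitive binomial $g = e^\gamma - e^\delta \in \mathcal{U}(I_G)$ with $\init_<(g) = e^\gamma$ and $e^\gamma \mid e^\alpha$.

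The key step is to establish the following conformal refinement with sign preservation: for any $f$ as above, there is a primitive binomial $g = e^\gamma - e^\delta \in I_G$ satisfying $\gamma \leq \alpha$, $\delta \leq \beta$ coordinatewise, and $e^\gamma > e^\delta$. I would proceed by induction on the total degree $|\alpha| + |\beta|$. If $f$ is itself primitive, take $g = f$ and we are done. Otherwise, the failure of primitivity produces a binomial $f' = e^{\alpha'} - e^{\beta'} \in I_G$ with $\alpha' \leq \alpha$, $\beta' \leq \beta$, and $(\alpha',\beta') \neq (\alpha,\beta)$. If $e^{\alpha'} > e^{\beta'}$, the induction hypothesis applied to $f'$ immediately produces the desired $g$. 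Otherwise $e^{\alpha'} < e^{\beta'}$, and I form
\[
f - e^{\alpha - \alpha'}f' = e^{\beta'}\bigl(e^{\alpha - \alpha'} - e^{\beta - \beta'}\bigr).
\]
Since $I_G$ is a prime ideal and $e^{\beta'} \notin I_G$, the binomial $f'' := e^{\alpha - \alpha'} - e^{\beta - \beta'}$ lies in $I_G$. Using that monomial orders are compatible with addition on $\mathbb{Z}^r$, the inequalities $\alpha - \beta > 0$ and $\beta' - \alpha' > 0$ sum to $(\alpha - \alpha') - (\beta - \beta') > 0$, so $\init_<(f'') = e^{\alpha - \alpha'}$, and the total degree of $f''$ is strictly smaller than that of $f$. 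Induction applied to $f''$ yields a primitive refinement with the sign property, which refines $f$ as well.

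The main obstacle is managing this sign condition throughout the induction: a bare refinement of the form ``there exists a primitive binomial with coordinatewise bounded exponents'' is insufficient, because its $<$-leading term may be $e^\delta$ rather than $e^\gamma$, and $e^\delta$ need not divide $e^\alpha$. The subtraction step above is precisely what reconciles the sign of the refinement with the sign of the original binomial. Once the refinement with sign preservation is established, the universal Gr\"obner basis property is immediate: $\init_<(f) = e^\alpha$ is divisible by $\init_<(g) = e^\gamma$ for some $g \in \mathcal{U}(I_G)$, which is exactly the Gr\"obner basis condition with respect to $<$. Since $<$ was arbitrary, $\mathcal{U}(I_G)$ is a universal Gr\"obner basis of $I_G$.
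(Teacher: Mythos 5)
The paper cites \cite[Proposition 10.1.9]{V} and gives no proof of its own, so there is no internal argument to compare against. Your proposal is a correct, self-contained rendition of the standard conformal-refinement argument showing that the primitive binomials of a toric ideal form a universal Gr\"obner basis (this is essentially Sturmfels' result that the Graver basis contains every reduced Gr\"obner basis; see \cite{S}, Chapter 4, or the graph case in \cite{V}). Your key move, subtracting $e^{\alpha-\alpha'}f'$ from $f$ to reconcile a sign-flipped refinement with the fixed term order, is exactly the right device, and the induction on total degree is sound. In particular, you correctly identify that a bare conformal refinement is not enough and that the sign of the refining binomial must be controlled throughout.

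A few small details are worth filling in to make the induction airtight. First, since $\alpha'\le\alpha$, $\beta'\le\beta$, and $\alpha$, $\beta$ have disjoint support, the exponents $\alpha'$ and $\beta'$ automatically have disjoint support, so the witness $f'$ of non-primitivity is already in the form required by the inductive hypothesis. Second, before applying induction to $f''=e^{\alpha-\alpha'}-e^{\beta-\beta'}$ one should check that both exponents are nonzero: $\alpha-\alpha'\neq 0$ because $e^{\alpha-\alpha'}>e^{\beta-\beta'}\ge 1$ forces $e^{\alpha-\alpha'}\neq 1$, and $\beta-\beta'\neq 0$ because $I_G$ is homogeneous (it is the kernel of the degree-doubling map $\varphi_G$), so a binomial of the form $e^a-1$ can never lie in $I_G$. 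With these observations the recursion terminates on honest binomials with disjoint-support exponents, and the argument is complete.
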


By taking $I=I_G$ for some $G$, we would automatically have a convenient description for $\mathcal{U}_I=\mathcal{U}(I_G)$. Even with this description, computing $\mathcal{U}(I_G)$ directly can be computationally difficult. In fact, the number of elements of $\mathcal{U}(I_G)$ can grow very quickly. For instance, $\mathcal{U}(I_{K_8})$ has over 40,000 elements \cite{DLST}. A description of how to compute this set can be found in \cite[Section 7]{S}. A graph-theoretic characterization of primitive closed even walks of a graph can be found in \cite{TT}.

\subsection{Generating large graphs}

In this section, we will show that it is possible to generate graphs for which $\mathcal{U}(I_G)$ is recursively computable, and for which $|\mathcal{U}(I_G)|$ is sufficiently large to ensure the security of the protocol $\mathcal{P}$. Furthermore, this can be done in polynomial time, depending on the number of constructive steps detailed below. We are going to introduce three operations for this purpose. 

\subsubsection{Gluing along a vertex} \label{subsubsec: vertex gluing}

Given a graph $G$, we can glue a disjoint graph $H$ to $G$ along a vertex by selecting some $v_G\in V(G)$ and $v_H\in V(H)$ and identifying the two vertices. More specifically, we define a new graph, denoted $G\star_{v_G,v_H} H$ (or simply $G\star H$ when $v_G$ and $v_H$ are understood), constructed as a disjoint union of the two graphs modulo the relation where $v_G$ equals $v_H$: \[\bigslant{G \sqcup H}{v_G\sim v_H}.\]

In general, computing $\mathcal{U}(I_{G\star H})$ can be difficult given $\mathcal{U}(I_G)$ and $\mathcal{U}(I_H)$ since new primitive closed even walks could be formed using odd cycles of $G$ and $H$ being linked through $v_G=v_H$. Furthermore, these odd cycles are not explicitly recorded in the list of primitive closed even walks, so we can't expect to compute the new list using the previous two lists alone. However, there is a special case where this operation works well. We start with an illustrative example. 

\begin{example}

Consider the graphs $G$ and $H$ pictured below.

\begin{figure}[h]
\centering
\begin{minipage}{.5\textwidth}
  \centering
\begin{tikzpicture}[scale=0.35]
      % graph G
 \draw[dotted] (4.5,2) node{$G$};    
      \draw (0,3) -- (0,9)node[midway,  left] {$a$};
      \draw (4.5,6) -- (0,9) node[midway, above] {$b$};
      \draw (0,3) -- (4.5,6) node[midway, below] {$g$};
      \draw (4.5,6) -- (9,3) node[midway, below] {$f$};
      \draw (9,3) -- (9,9) node[midway, right] {$h$};
      \draw (4.5,6) -- (9,9) node[midway, above] {$c$};
      \draw (9,9) -- (13.5,6) node[midway, above] {$d$};
      \draw (13.5,6) -- (9,3) node[midway, right] {$e$};

       \end{tikzpicture}

\end{minipage}%
\begin{minipage}{.25\textwidth}
  \centering
\begin{tikzpicture}[scale=0.35]
\draw[dotted] (3,-1) node{$H$};   
\draw (0,3) -- (-4.5,3) node[midway, above] {$k$};
      \draw (0,3) -- (0, -1) node [midway, right] {$l$};
      \draw (-4.5,3) -- (-4.5, -1) node [midway, left] {$j$};
      \draw (-4.5, -1) -- (0, -1) node [midway, above] {$o$};
      \draw (-4.5, -1) -- (-4.5, -5) node [midway, left] {$i$};
      \draw (0, -1) -- (0, -5) node [midway, right] {$m$};
      \draw (-4.5, -5) -- (0, -5) node [midway, below] {$n$};

      \end{tikzpicture}
\end{minipage}
\end{figure} 

The set of primitive closed even walks for each can be directly computed as \[\mathcal{U}(I_G) = \{ce-df, acf-bgh, ac^2e-bdgh, adf^2-begh\}\] 
\[\mathcal{U}(I_H) = \{im-no, jl-ok, ikm-jln\}.\]

We can create a new graph $G \star H$ by gluing on $H$ at a vertex of $G$, say at the vertex incident to $a$ and $g$ in $G$, and $k$ and $\ell$ in $H$: 
\begin{center}
    \begin{tikzpicture}[scale=0.35]
      % graph G
   \draw[dotted] (4.5,1) node{$G\star H$};      
      \draw (0,3) -- (0,9)node[midway,  left] {$a$};
      \draw (4.5,6) -- (0,9) node[midway, above] {$b$};
      \draw (0,3) -- (4.5,6) node[midway, below] {$g$};
      \draw (4.5,6) -- (9,3) node[midway, below] {$f$};
      \draw (9,3) -- (9,9) node[midway, right] {$h$};
      \draw (4.5,6) -- (9,9) node[midway, above] {$c$};
      \draw (9,9) -- (13.5,6) node[midway, above] {$d$};
      \draw (13.5,6) -- (9,3) node[midway, right] {$e$};
      \draw (0,3) -- (-4.5,3) node[midway, above] {$k$};
      \draw (0,3) -- (0, -1) node [midway, right] {$l$};
      \draw (-4.5,3) -- (-4.5, -1) node [midway, left] {$j$};
      \draw (-4.5, -1) -- (0, -1) node [midway, above] {$o$};
      \draw (-4.5, -1) -- (-4.5, -5) node [midway, left] {$i$};
      \draw (0, -1) -- (0, -5) node [midway, right] {$m$};
      \draw (-4.5, -5) -- (0, -5) node [midway, below] {$n$};

       \end{tikzpicture}
    
\end{center}
We can check that the set of primitive closed even walks for the resulting graph is the union of both lists \[\mathcal{U}(I_{G \star H}) = \{ce-df, acf-bgh, ac^2e-bdgh, adf^2-begh, im-no, jl-ok, ikm-jln\}.\] 
In fact, we would have arrived at the same result if we chose any other pair of vertices to identify. \hfill $\square$
\end{example}

In general, $\mathcal{U}(I_{G\star H})$ contains the union of $\mathcal{U}(I_G)$ and $\mathcal{U}(I_H)$. When $H$ is a bipartite graph however (i.e. contains no odd-length cycles), we get the reverse containment too. The next proposition is motivated by \cite[Section 2.0.3]{N}.

\begin{proposition}\label{prop: glue bipartite}
Let $G$ and $B$ be finite simple graphs such that $B$ is bipartite and $V(G)\cap V(B)=\emptyset$. Let $v_G\in V(G)$ and $v_B\in V(B)$, and form a new graph $G\star B$ by identifying $v_G$ and $v_B$. Then 
\[\mathcal{U}(I_{G\star B})=\mathcal{U}(I_G)\sqcup \mathcal{U}(I_B).\]

\end{proposition}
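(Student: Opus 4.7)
My plan is to prove $\mathcal{U}(I_{G\star B})=\mathcal{U}(I_G)\sqcup\mathcal{U}(I_B)$ by establishing both inclusions, with the bulk of the work in the reverse inclusion. Disjointness of the union is immediate because $\mathcal{U}(I_G)$ uses only edge-variables from $E(G)$ while $\mathcal{U}(I_B)$ uses only edge-variables from $E(B)$, and $E(G)\cap E(B)=\emptyset$. For the forward inclusion, I first observe that $I_G\subseteq I_{G\star B}$ since every closed even walk of $G$ remains a closed even walk of $G\star B$ (and likewise for $B$). To argue that primitivity is preserved, I note that the restriction of $\varphi_{G\star B}$ to $\mathbb{K}[E(G)]$ coincides with $\varphi_G$ (after relabeling $v_G$ as the glued vertex), so any binomial in $I_{G\star B}$ supported only on $E(G)$ already lies in $I_G$. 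Thus any divisibility witness against primitivity in $I_{G\star B}$ already lives inside $I_G$ and vice versa, making primitivity in $I_G$ equivalent to primitivity in $I_{G\star B}$ for such binomials; the same argument applies to $B$.

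For the reverse inclusion I proceed by contradiction. Suppose $f = e^{\alpha}-e^{\beta}\in\mathcal{U}(I_{G\star B})$ corresponds to a primitive closed even walk $W$ of $G\star B$ that uses edges from both $G$ and $B$. Since $v:=v_G=v_B$ is the unique vertex shared by the two subgraphs, $W$ must visit $v$; cyclically rotating so that $W$ starts at $v$ (which at worst negates the associated binomial), I decompose $W$ into maximal consecutive segments lying entirely in $E(G)$ or entirely in $E(B)$. Each such segment is a closed walk at $v$. Since $W$ uses $B$-edges, at least one segment $W'$ is a $B$-segment, and because $B$ is bipartite every closed walk in $B$ has even length. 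Hence $W'$ is itself a closed even walk and yields a binomial $f' = e^{\alpha'}-e^{\beta'}\in I_B\subseteq I_{G\star B}$.

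The step I expect to require the most care is tracking parities to verify that $f'$ genuinely witnesses the non-primitivity of $f$. The odd/even position partition within $W'$ either coincides with, or is swapped from, the partition it inherits from $W$, depending on whether $W'$ begins at an even or odd offset inside $W$. In the aligned case, $e^{\alpha'}\mid e^{\alpha}$ and $e^{\beta'}\mid e^{\beta}$, so $f'$ itself contradicts primitivity of $f$; in the swapped case the same divisibilities hold with the roles of $\alpha$ and $\beta$ interchanged, and one uses $-f'= e^{\beta'}-e^{\alpha'}$ instead. Either way primitivity of $f$ fails, so $W$ must lie entirely in $G$ or entirely in $B$, and the equivalence of primitivity established for the forward inclusion then gives $f\in\mathcal{U}(I_G)\sqcup\mathcal{U}(I_B)$, completing the proof.
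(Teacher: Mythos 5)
Your proof is correct in outline but takes a genuinely different route from the paper. The paper's argument invokes the Tatakis--Thoma structural characterization (a primitive closed even walk is either an even cycle or contains at least two odd cycles) together with a block-structure lemma from Galetto et al., and then argues by cases on that structure. You instead decompose the mixed walk $W$ at the cut vertex $v$ into maximal $G$-segments and $B$-segments and use bipartiteness only to guarantee each $B$-segment has even length, producing a binomial $f'\in I_B$ that directly witnesses non-primitivity of $f$. Your route is more elementary and self-contained (it does not need the structure theorem), and it also fills in a detail the paper leaves implicit in the forward inclusion, namely why a divisibility witness for $f\in\mathcal{U}(I_G)$ inside $I_{G\star B}$ must already lie in $I_G$.

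There is one small gap you should close. You implicitly assume the extracted binomial $f'=e^{\alpha'}-e^{\beta'}$ is nonzero; if the $B$-segment $W'$ retraces itself (for example $v,e,u,e,v$) one gets $\alpha'=\beta'$ and $f'=0$, which witnesses nothing. This is repairable: for a primitive $f=e^{\alpha}-e^{\beta}$ one always has $\gcd(e^{\alpha},e^{\beta})=1$, since otherwise $e^{\alpha}/\gcd - e^{\beta}/\gcd$ lies in $I_{G\star B}$ and contradicts primitivity. Combined with your divisibilities $e^{\alpha'}\mid e^{\alpha}$, $e^{\beta'}\mid e^{\beta}$ (or the swapped version), $\alpha'=\beta'$ would force $e^{\alpha'}\mid\gcd(e^{\alpha},e^{\beta})=1$, i.e.\ $W'$ trivial, which is impossible for a nonempty $B$-segment. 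It would also be worth a sentence noting that $f'\neq\pm f$ because $W'$ is a proper sub-walk (so $f'$ has strictly smaller total degree), which is needed for $f'$ to count as an ``other'' binomial in the definition of primitivity.
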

\begin{proof}
One direction is clear, since any primitive closed even walk of $G$ or $B$ must remain primitive in $G\star B$. Therefore $\mathcal{U}(I_G)\sqcup \mathcal{U}(I_B)\subseteq \mathcal{U}(I_{G\star B})$.

For the other direction, note that by \cite{TT} (and rephrased in \cite[Theorem 1.7]{DSNR}), a primitive closed even walk is either an even cycle, or contains at least two odd cycles. If $G\star B$ has a primitive closed even walk $\Gamma$ involving odd cycles, then these odd cycles must be in $G$ since $B$ is bipartite. 
If $\Gamma$ includes an edge of $B$, then the walk must pass through $v_B=v_G$ at least twice (in order to start and end in $G$). The edges between the first instance of $v_B$ and the second instance will define an even cycle $\Gamma_B$ of $B$, which is not possible by \cite[Lemma 2.2 (ii)]{GHKKPVT}.

Similarly, if there is some even cycle of $G\star B$ that is not contained in $G$ or $B$ exclusively, then we can write it as 
\[v_{i_1},e_{i_1},v_{i_2},e_{i_2},\ldots, e_{i_k},v_{i_{k+1}}\]

\noindent where all $e_i$ and $v_i$ are distinct except for $v_{i_1}=v_{i_{k+1}}$. If the cycle uses edges in $B$, then $v_B$ would appear twice in the list, unless $v_{i_1}=v_{i_{k+1}}=v_B$, which would mean that all of the edges are either entirely in $G$ or entirely in $B$, a contradiction.
\end{proof}

\subsubsection{Star contractions and subdivisions}

Next, we will consider a graph operation called a star contraction. Its use in the context of toric ideals of graphs was first introduced in \cite{N}. 

\begin{definition}\cite[Definition 3.4]{N}\label{def: star contraction}
Let $G$ be a graph with $v\in V(G)$, and $N_E(v)$ be the list of edges in $E(G)$ which are incident to $v$. The \emph{star contraction} of $G$ at $v$ is the graph $G_v$ formed by performing an edge contraction on all of the edges in $N_E(v)$ simultaneously. That is, $G_v$ is constructed by first deleting all edges in $N_E(v)$, and then identify all vertices in the neighborhood of $v$.  
\end{definition} 

\begin{example}
\cite[Example 3.0.6]{N} Consider the star contraction of the graph $G$ below along the vertex $v$ incident to $e$ and $f$. The list of primitive closed even walks of $G$ and $G_{v}$ have also been listed. Notice that we can get the list of elements in $\mathcal{U}(I_{G_v})$ from $\mathcal{U}(I_G)$ by setting $e=f=1$.
\newpage 

\begin{figure}[ht]
\centering
\begin{minipage}{.5\textwidth}
  \centering
\begin{tikzpicture}[scale=1]

\draw (-2,2) -- (0,2) node[midway, above] {$a$};
      \draw (0,2) -- (2,2) node [midway, above] {$b$};
      \draw (2,2) -- (2,0) node [midway, right] {$c$};
      \draw (2,0) -- (0,0) node [midway, below] {$d$};
      \draw (0,0) -- (-2,0) node [midway, below] {$e$};
      \draw (-2,0) -- (-2,2) node [midway, left] {$f$};
      \draw (0,2) -- (0, 0) node [midway, right] {$g$};
    \end{tikzpicture}

\end{minipage}$\hspace{-10mm}\longrightarrow$
\begin{minipage}{.3\textwidth}
  \centering
\begin{tikzpicture}[scale=1]

\draw (0,0)  to[out=-235, in=235](0,2) node [midway,above, left] {$a$};
      \draw (0,2) -- (2,2) node [midway, above] {$b$};
      \draw (2,2) -- (2,0) node [midway, right] {$c$};
      \draw (2,0) -- (0,0) node [midway, below] {$d$};
      \draw (0,2) -- (0, 0) node [midway, right] {$g$};
      
    \end{tikzpicture}
\end{minipage}

\[\hspace{1.0cm}\langle ace-bdf, ae-fg, bd-cg\rangle  \hspace{0.8cm} \longrightarrow \hspace{0.8cm} \langle ac-bd, a-g, bd-cg\rangle \]
\end{figure} \hfill $\square$

\end{example}

To simplify notation, we will define the ring homomorphism \[\pi_v: \mathbb{K}[E(G)] \rightarrow \mathbb{K}[E(G)\setminus N_E(v) ] \] on generators by $e\mapsto 1$ if $e\in N_E(v)$, and $e\mapsto e$ otherwise. To avoid any issues with defining primitive walks for multigraphs (like in the previous example), we will restrict to the case when the star contraction results in a simple graph.

\begin{lemma}\label{lem: star substitution}\cite[Theorem 3.10]{N}
Let $G$ be a finite simple graph. Suppose that $v\in V(G)$ is such that $G_v$ is a simple graph. Then

\[ \mathcal{U}(I_{G_v})\subseteq \pi_v(\mathcal{U}(I_G))\]
\end{lemma}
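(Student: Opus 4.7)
The plan is to start with a primitive closed even walk $\Gamma$ in $G_v$ representing $f = e^\alpha - e^\beta \in \mathcal{U}(I_{G_v})$, lift $\Gamma$ to a closed even walk $\tilde{\Gamma}$ in $G$ whose binomial $\tilde{f}$ satisfies $\pi_v(\tilde{f}) = f$, and then replace $\tilde{f}$ by a primitive divisor $\tilde{f}' \in \mathcal{U}(I_G)$, verifying that $\pi_v(\tilde{f}') = f$ as well.

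First I would construct the lift $\tilde{\Gamma}$ by traversing $\Gamma$ edge by edge in $G$, using the natural identification between edges of $G_v$ and edges of $E(G) \setminus N_E(v)$. Wherever $\Gamma$ passes through the contracted vertex $\bar{v}$, the two incident edges of $\Gamma$ correspond in $G$ to edges ending at (possibly distinct) neighbors $w, w'$ of $v$; when $w \neq w'$ I insert the two-edge detour $(w,v), (v,w') \in N_E(v)$. Each detour adds two consecutive, hence opposite-parity, edges, so the alternation between odd- and even-indexed edges is preserved, and $\tilde{\Gamma}$ is a closed walk of even length. Its binomial $\tilde{f} = e^{\tilde{\alpha}} - e^{\tilde{\beta}} \in I_G$ satisfies $\pi_v(\tilde{f}) = f$ because every inserted edge is sent to $1$.

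Since $\mathcal{U}(I_G)$ generates $I_G$ and consists exactly of the primitive binomials of $I_G$, there exists a primitive $\tilde{f}' = e^\gamma - e^\delta \in \mathcal{U}(I_G)$ with $e^\gamma \mid e^{\tilde{\alpha}}$ and $e^\delta \mid e^{\tilde{\beta}}$. To deduce $\pi_v(\tilde{f}') = f$, I would establish two supporting facts. First, no nonzero binomial of $I_G$ is supported entirely on $N_E(v)$: if $e^{\gamma'} - e^{\delta'} \in I_G$ had both monomials in $N_E(v)$, comparing $\varphi_G$-images would force the exponents of each neighbor variable to match (each edge $(v,w) \in N_E(v)$ contributes exactly $v \cdot w$), giving $\gamma' = \delta'$. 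Second, primitivity of $\tilde{f}'$ forces $e^\gamma$ and $e^\delta$ to be coprime, as any common factor could be cancelled using that $\varphi_G$ lands in an integral domain, producing a strictly smaller binomial in $I_G$.

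Combining the two facts, $\pi_v(\tilde{f}') \neq 0$: otherwise $\gamma$ and $\delta$ would agree outside $N_E(v)$, and coprimality would force $\tilde{f}'$ to be supported entirely on $N_E(v)$, contradicting the first fact. Hence $\pi_v(\tilde{f}')$ is a nontrivial binomial of $I_{G_v}$ whose two monomials divide $e^\alpha$ and $e^\beta$ respectively, and primitivity of $f$ then forces $\pi_v(\tilde{f}') = f$. The hard part will be this primitivity step: the initial lift $\tilde{f}$ is rarely primitive (duplicated detours or interactions with the original walk can produce reducible binomials), and the argument that any primitive divisor still projects to $f$ is where the simplicity hypothesis on $G$ (and on $G_v$) is essential, via the no-nontrivial-binomial-on-$N_E(v)$ claim. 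The lifting and parity-tracking steps are routine bookkeeping by comparison.
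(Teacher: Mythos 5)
Your argument is correct in its essential structure, and it fills in a proof that the paper itself does not supply (the paper merely cites [N, Theorem~3.10]). The three-stage plan --- lift $\Gamma$ to a closed even walk $\tilde\Gamma$ of $G$ by inserting two-edge detours through $v$, pass to a primitive divisor $\tilde f'\in\mathcal U(I_G)$, and show $\pi_v(\tilde f')=f$ using coprimality plus the observation that no nonzero binomial of $I_G$ is supported entirely on $N_E(v)$ --- is sound, and Fact~1 and Fact~2 are proved correctly: Fact~1 uses only that $G$ is simple (distinct neighbors $w$ give distinct vertex variables, so matching $\varphi_G$-images forces the exponents on $N_E(v)$ to agree), and Fact~2 is the usual cancellation argument in a prime toric ideal. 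The parity bookkeeping for the lift is also fine: inserting a two-edge detour shifts every subsequent edge by two positions, so the assignment of edges to the two monomials is undisturbed, and $\pi_v$ kills exactly the inserted edges.

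One step that you assert without justification, and which does carry content, is that $\pi_v(\tilde f')\in I_{G_v}$ --- i.e.\ that $\pi_v$ maps $I_G$ into $I_{G_v}$. This is needed before you can invoke primitivity of $f$ inside $I_{G_v}$, and it is precisely where the hypothesis that $G_v$ is simple enters beyond the setup: if $\Gamma'$ is the closed even walk of $G$ underlying $\tilde f'$, then contracting every visit of $\Gamma'$ to $v$ (deleting its two incident $N_E(v)$-edges and merging their $N(v)$-endpoints into $\bar v$) yields a closed even walk of $G_v$, and the key point is that no edge of $G$ between two neighbours of $v$ can appear (such an edge would become a loop at $\bar v$, contradicting simplicity of $G_v$), and no parallel edges arise. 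Each such deletion removes a consecutive pair of edges, so the alternation of the binomial is preserved and $\pi_v(\tilde f')$ is indeed the walk binomial of the contracted walk, hence lies in $I_{G_v}$. I'd recommend making this verification explicit; once it is in place, the primitivity of $f$ gives $\pi_v(\tilde f')=f$ exactly as you say, and the inclusion $\mathcal U(I_{G_v})\subseteq\pi_v(\mathcal U(I_G))$ follows.
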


These results allow us to produce new graphs through star contractions while still having control over the enumeration of primitive closed even walks. Note that even though the containment in Lemma \ref{lem: star substitution} is generally proper, the set $\pi_v(\mathcal{U}(I_G))$ still defines a universal Gr\"obner basis of $I_{G_v}$ (although not a reduced basis).

This operation can also be undone to produce larger graphs, a process called a \emph{star subdivision}, generally discussed in \cite{BC} for toric ideals of graphs. We will show that in the special case when the subdivision is done along a vertex of degree 2, the list of primitive closed even walks has an explicit description. To do this, consider a graph with the following structure:

 \begin{figure}[!ht]
    \centering \hspace{-10mm}
    \begin{minipage}{.5\textwidth}
  \centering
    \begin{tikzpicture}[scale=0.55]
      % graph G
      \draw[dotted] (6.3,0) circle (3cm) node{$G$};
      \draw[thin,dashed] (4.5,3) -- (6,3);
      \draw[thin,dashed] (4.5,-3) -- (6,-3);
      \draw (4.5,-3) -- (3,-2.) node[midway,below ] {\footnotesize{$x$}};
       \draw (2,-0.5) -- (3,-2) node[midway,below, left ] {\footnotesize{$y'$}};
      \draw (2,-0.5) -- (3,1.5) node[midway, above,left] {\footnotesize{$x'$}};
       \draw (3,1.5) -- (4.5,3) node[midway,above ] {\footnotesize{$y$}};

      % draw nodes
      \fill[fill=white,draw=black] (2,-0.5)  circle (.1) node[right] {\footnotesize{$v$}}; 
      \fill[fill=white,draw=black] (3,1.5) circle (.1);
      \fill[fill=white,draw=black] (3,-2) circle (.1);

    \end{tikzpicture}
    \end{minipage}
    $\hspace{-10mm}\longrightarrow$
\begin{minipage}{.3\textwidth}
  \centering

      \begin{tikzpicture}[scale=0.55]
      % graph G
      \draw[dotted] (6.3,0) circle (3cm) node{$G_v$};
      \draw[thin,dashed] (4.5,3) -- (6,3);
      \draw[thin,dashed] (4.5,-3) -- (6,-3);
      \draw (4.5,-3) -- (2,-0) node[midway,below ] {\footnotesize{$x$}};
       \draw (2,0) -- (4.5,3) node[midway,above ] {\footnotesize{$y$}};

      % draw nodes
      \fill[fill=white,draw=black] (2,0)  circle (.1) node[right] {\footnotesize{$v$}};

    \end{tikzpicture}
    \end{minipage}
 \end{figure}

\noindent An important feature of such a graph is that the star contraction along the vertex $v$ incident to $x'$ and $y'$ results in another degree 2 vertex (which we also call $v$ by an abuse of notation). In this case, we will say that $G$ is the (unique) star subdivision of $G_v$ along $v$. More generally, there are usually multiple star subdivisions of a graph (see \cite[Definition 3.0.3]{BC}) if the degree of $v$ is greater than two.

To demonstrate the effect on the list of primitive closed even walks after the star subdivision, consider the following map on polynomial rings, \[\psi_v: \mathbb{K}[\mathbf{e},x,y] \rightarrow \mathbb{K}[\mathbf{e},x,y,x',y']\] defined by $x\mapsto xx'$, $y\mapsto yy'$, and $f\mapsto f$ for $f\in\mathbf{e}=E(G)\setminus\{x,x',y,y'\}$. Notice that if $m_1x-m_2y$ is a closed even walk of $G_v$ (where $m_1,m_2$ are monomials with support in $\mathbf{e}$), then $\psi_v(m_1x-m_2y) = m_1xx'-m_2yy'$ is a closed even walk of $G$. The next result shows that the same is true for primitive walks.

\begin{proposition}\label{prop: star degree 2}
Let $G$ be a finite simple graph and suppose that $v\in V(G)$ has degree 2 in $G$ and degree 2 in the star contraction $G_v$. Let the edges incident to $v$ be labeled as above. Then \[\mathcal{U}(I_{G_v}) = \pi_v(\mathcal{U}(I_G))\]
\noindent and \[\mathcal{U}(I_{G}) = \psi_v(\mathcal{U}(I_{G_v}) ). \]
\end{proposition}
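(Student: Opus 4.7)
The plan is to show that $\pi_v$ and $\psi_v$ restrict to mutually inverse bijections between $\mathcal{U}(I_G)$ and $\mathcal{U}(I_{G_v})$, from which both equalities follow. One containment, $\mathcal{U}(I_{G_v}) \subseteq \pi_v(\mathcal{U}(I_G))$, is already provided by Lemma \ref{lem: star substitution}, and the identity $\pi_v \circ \psi_v = \mathrm{id}$ on $\mathbb{K}[E(G_v)]$ makes $\psi_v$ the natural candidate for a section of $\pi_v$ on primitive elements.

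The core step is a structure theorem for primitive walks that use $v$ or $v'$. Let $w, w'$ denote the two neighbors of $v$ in $G$, each of degree two by hypothesis, with $x'$ joining $v$ to $w$, $y'$ joining $v$ to $w'$, $x$ joining $w'$ to some vertex $Q$, and $y$ joining $w$ to some vertex $P$. For a primitive binomial $B_{\Gamma'} = x^\alpha y^\beta n_1' - x^\gamma y^\delta n_2' \in I_{G_v}$ with $n_1', n_2' \in \mathbb{K}[\mathbf{e}]$, matching the $v'$-degree in $\varphi_{G_v}$ gives $\alpha + \beta = \gamma + \delta$, and primitivity forces $\min(\alpha,\gamma) = \min(\beta,\delta) = 0$ (otherwise one may divide $B_{\Gamma'}$ by $x$ or $y$ and remain in $I_{G_v}$). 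Up to swap this means $B_{\Gamma'} = x^\alpha n_1' - y^\alpha n_2'$ with $\alpha \geq 1$. An analogous analysis in $G$, using the three linear equations arising from the degree-two vertices $v, w, w'$ under $\varphi_G$, forces any primitive binomial $B = m_1 - m_2 \in I_G$ involving $v$ to take the form $m_1 = (xx')^\alpha n_1'$, $m_2 = (yy')^\alpha n_2'$ (up to swap) for some $\alpha \geq 1$ and $n_i' \in \mathbb{K}[\mathbf{e}]$, which is precisely $\psi_v(x^\alpha n_1' - y^\alpha n_2')$.

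Next I would verify $\psi_v(I_{G_v}) \subseteq I_G$ by a direct computation: the contribution of the local subgraph $\{v, w, w'\}$ to $\varphi_G((xx')^\alpha n_1')$ and to $\varphi_G((yy')^\alpha n_2')$ is in both cases $v^\alpha w^\alpha (w')^\alpha$, so these common factors cancel and the identity reduces to $\varphi_{G_v}(x^\alpha n_1') = \varphi_{G_v}(y^\alpha n_2')$. Preservation of primitivity in both directions is then a descent argument: a proper divisor $B \in I_G$ of $\psi_v(B_{\Gamma'})$ would yield a nonzero binomial $\pi_v(B) \in I_{G_v}$ (nonzero because $\pi_v$ annihilates a binomial only if its two monomials agree after setting $x' = y' = 1$, which forces the original binomial to vanish) properly dividing $B_{\Gamma'}$, contradicting its primitivity; the reverse lift via $\psi_v$ is symmetric. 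Finally, primitive walks in $G$ not meeting $v$ use only edges of $\mathbf{e}$ (since $w, w'$ have degree two and a walk through either without $x'$ or $y'$ would backtrack and lose primitivity), and $\varphi_G, \varphi_{G_v}$ agree on $\mathbb{K}[\mathbf{e}]$, so $I_G \cap \mathbb{K}[\mathbf{e}] = I_{G_v} \cap \mathbb{K}[\mathbf{e}]$ and such walks correspond under the identity on which both $\pi_v$ and $\psi_v$ act trivially.

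The main obstacle will be the bookkeeping in the structure theorem: the four exponents of $x', y'$ in $m_1$ and $m_2$ for a binomial in $I_G$ are coupled to the exponents of $x, y$ through the three linear constraints arising from $\varphi_G$ at $v, w, w'$, and one must combine these with the four primitivity conditions (minima of paired exponents equal zero) to deduce the tight form $(xx')^\alpha n_1' - (yy')^\alpha n_2'$.
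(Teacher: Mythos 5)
Your approach is sound and arrives at the same equalities, but it departs from the paper in one substantive way: you derive the structural form $m_1 = (xx')^\alpha n_1'$, $m_2 = (yy')^\alpha n_2'$ of a primitive walk through $v$ directly from the linear degree constraints of $\varphi_G$ at the three degree-two vertices together with the relative-primeness of the two monomials of a primitive binomial, whereas the paper simply cites the Tatakis--Thoma characterization from \cite{TT} to assert that such a walk must traverse $x, x', y, y'$ either once or twice. Your derivation is more self-contained and arguably more transparent (it also recovers, as a byproduct, that $\alpha$ is the common exponent, without invoking the once-or-twice bound); the paper's route is shorter but leans on an external structure theorem. The bijection framing via $\pi_v \circ \psi_v = \mathrm{id}$ and $\psi_v \circ \pi_v = \mathrm{id}$ (the latter on walks of the established shape) is the same as the paper's, and your descent argument for primitivity preservation is the mirror image of the paper's lifting argument — both contradict primitivity by transporting a hypothetical proper divisor across the correspondence.

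Two places need tightening. First, the parenthetical justification that $\pi_v(B)\neq 0$ is not correct as stated: $\pi_v$ certainly can annihilate a nonzero binomial (e.g.\ $x'-y'$). What saves the argument is the specific setting: if $B=u_1-u_2\in I_G$ with $u_1\mid (xx')^\alpha n_1'$ and $u_2\mid(yy')^\alpha n_2'$, then $y'\nmid u_1$ and $x'\nmid u_2$, so $\pi_v(B)=0$ would force $u_1=(x')^k m$ and $u_2=(y')^\ell m$ for a common $m$, and $(x')^k-(y')^\ell\in I_G$ then compels $k=\ell=0$ and $B=0$. You should make this explicit rather than relying on the unqualified claim. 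Second, the descent step tacitly requires that $\pi_v(B)$ be a \emph{proper} divisor of $B_{\Gamma'}$, i.e.\ $\pi_v(B)\neq\pm B_{\Gamma'}$; this again needs a short check (if $\pi_v(B)=B_{\Gamma'}$, the degree constraints at $w$ and $w'$ force $B=\psi_v(B_{\Gamma'})$). You also use $\pi_v(I_G)\subseteq I_{G_v}$ without comment — it is true and routine to verify from the incidence-matrix constraints, but it is worth a sentence. With these points filled in, the proof is complete and correct.
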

\begin{proof}
First note that for $\Gamma\in I_G$ and $\gamma\in I_{G_v}$: \[\psi_v(\pi_v(\Gamma)) = \Gamma  \hspace{4mm}\text{ and }\hspace{4mm} \pi_v(\psi_v(\gamma))=\gamma, \hspace{4mm}\] so it suffices to prove the first equality to show that the second claim is also true.

 By the structure of primitive closed even walks (see \cite{TT}), any primitive closed even walk of $G$ that passes through $v$ must also pass through the edges $x,y,x'$ and $y'$. Furthermore, it would either pass through all $4$ edges exactly once or twice. Therefore, all binomials in $\mathcal{U}(I_G)$ which correspond to a primitive walk passing through $v$ must be of the form:
\[m_1xx'-m_2yy' \hspace{4mm} \text{ or } \hspace{4mm} m_1(xx')^2-m_2(yy')^2 \]
\noindent where $m_1$ and $m_2$ are monomials with support in  $\mathbf{e}=E(G)\setminus\{x,x',y,y'\}$.

Let $\Gamma=m_1xx'-m_2yy'$ be a primitive closed even walk of $G$. Assume that $\pi_v(\Gamma)=m_1x-m_2y$ is not primitive in $G_v$. Then there would be some other binomial $\gamma= m_3-m_4\in I_{G_v}$ such that $m_3|m_1x$ and $m_4|m_2y$. If the support of $m_3$ and $m_4$ does not include $x$ or $y$, then $\gamma$ is unaffected by the star subdivision of $G_v$ and corresponds to a closed even walk of both $G$ and $G_v$, contradicting that $\Gamma$ is primitive. 

The only other case is when $x|m_3$ and $y|m_4$ (since $\gamma$ passing through $v$ must also pass through both $x$ and $y$). In this case, $\psi_v(\gamma) = m_3x'-m_4y'$ defines a closed even walk of $G$ such that $m_3x'|m_1xx'$ and $m_4y'|m_2yy'$, also contradicting the fact that $\Gamma$ is primitive. The case $\Gamma=m_1(xx')^2-m_2(yy')^2$ is similar. Since walks that do not pass through $v$ are unaffected by the star contraction, we have shown that $\pi_v(\mathcal{U}(I_G)) \subseteq \mathcal{U}(I_{G_v})$. Together with Lemma \ref{lem: star substitution}, we have shown that $\mathcal{U}(I_{G}) = \psi_v(\mathcal{U}(I_{G_v}) )$, as required.
\end{proof}

\subsubsection{Gluing even cycles} \label{subsubsec: cycle gluing}

Finally, we can obtain new graphs for which we can recursively generate the list of primitive closed even walks using cycle gluing. We can do this similarly to the vertex gluing defined earlier, except that we identify two edges instead of two vertices. More specifically, given disjoint graphs $G$ and $H$, and edges $e_G\in E(G)$ and $e_H\in E(H)$, we can produce a new graph of the form \[\bigslant{G \sqcup H}{e_G\sim e_H}\] which we denote by $G*_{e_G,e_H}H$ (or simply $G*H$ when $e_G$ and $e_H$ have already been specified). The use of cycle gluing in the context of toric ideals of graphs and geometric vertex decomposition was introduced in \cite[Theorem 3.11]{CDSRVT}. In the proof of that result, the structure of $\mathcal{U}(I_{G*H})$ was described, which we demonstrate in the next example.

\begin{example}
Consider the graphs $G$ and $H$ 

\begin{figure}[h]
\centering
\begin{minipage}{.5\textwidth}
  \centering
\begin{tikzpicture}[scale=0.55]
\draw[dotted] (4.5,2) node{$G$};   
     \draw (-2,0) -- (-1, 2) node [midway, above left] {$a$};
     \draw (-1,2) -- (1 ,3) node [midway, above] {$b$};
     \draw (1,3) -- (3, 1) node [midway, above] {$c$};
     \draw (3,1) -- (3,-1) node [midway, right] {$d$};
     \draw (0, -1.5) -- (3, -1) node [midway, below] {$e$};
     \draw (-2, 0) -- (0, -1.5) node [midway, below] {$f$};
        \end{tikzpicture}

\end{minipage}%
\begin{minipage}{.25\textwidth}
  \centering
\begin{tikzpicture}[scale=0.55]
\draw[dotted] (7.5,1.6) node{$H$};   
 \draw (3,1) -- (5, 3) node [midway, above] {$g$};
     \draw (5,3) -- (5.5, -1) node [midway, right] {$h$};
     \draw (5.5,-1) -- (3,-1) node [midway, below] {$i$};
     \draw (3,1) -- (3,-1) node [midway, left] {$j$};

      \end{tikzpicture}
\end{minipage}
\end{figure} 

\noindent where $\mathcal{U}(I_G) = \{ ace-bdf \}$ and $\mathcal{U}(I_H) = \{ hj-ig \}$. We can define a new graph $G * H$ by gluing along two edges, say $d$ and $j$ (which we call $k$ after the identification).

\begin{center}
\begin{tikzpicture}[scale=0.55]
        \draw[dotted] (8,1.5) node{$G*H$};   
      \draw (-2,0) -- (-1, 2) node [midway, above left] {$a$};
     \draw (-1,2) -- (1 ,3) node [midway, above] {$b$};
     \draw (1,3) -- (3, 1) node [midway, above] {$c$};
     \draw (3,1) -- (5, 3) node [midway, above] {$g$};
     \draw (5,3) -- (5.5, -1) node [midway, right] {$h$};
     \draw (5.5,-1) -- (3,-1) node [midway, below] {$i$};
     \draw (3,1) -- (3,-1) node [midway, right] {$k$};
     \draw (0, -1.5) -- (3, -1) node [midway, below] {$e$};
     \draw (-2, 0) -- (0, -1.5) node [midway, below] {$f$};    
        \end{tikzpicture}
    \end{center}
The list of primitive closed even walks for $G*H$ becomes \[\mathcal{U}(G * H) =\langle ace-bfk, hk-ig, aceh-bfgi \rangle \]
where one additional walk is produced by extending $ace-bdf$ to bypass $d=k$ and transverse the even cycle instead.\hfill $\square$
\end{example}

The ``extended'' walks from the example are formed by taking any walk through the edge used for gluing and extending the walk to traverse the even cycle. We make this more precise as follows. Let $G$ be a finite simple graph with $e_G\in E(G)$, and $C_{2k}$ be some disjoint cycle with $e_C\in E(C_{2k})$. We will glue $G$ to $C_{2k}$ along $e_G$ and $e_C$ to produce a new graph $G*C_{2k}$. Suppose that $\gamma=u_1e_G^\ell-v_1 \in \mathcal{U}(I_G)$ where $\ell=1,2$, and $u_2e_C-v_2\in \mathcal{U}(I_{C_{2k}})$ is the binomial defined by the cycle $C_{2k}$. Here $u_1,u_2,v_1,v_2$ are monomials with support not including $e_G$ or $e_C$. If $e_G$ is glued to $e_C$ and relabeled as $e$, then the extension of $\gamma$, denoted by $\bar{\gamma}$, is the binomial $u_1v_2 - v_1u_2$ if $\ell=1$ and $u_1v_2^2 - v_1u_2^2$ if $\ell=2$. It is not difficult to see that both of these define primitive closed even walks contained in $\mathcal{U}(I_{G * C_{2k}})$.

\begin{proposition}\label{prop: glue even cycle}
Let $G$ be a finite simple graph and $C_{2k}$ be a disjoint cycle of length $2k$, $k>1$. Let $e_G\in E(G)$ and $e_C\in E(C_{2k})$, and form a new graph $G* C_{2k}$ by identifying $e_G$ and $e_C$ as the edge $e$. If $\Gamma\in \mathcal{U}(I_{G * C_{2k}})$, then either:

\begin{itemize}
    \item $\Gamma\in\mathcal{U}(I_G)$
    \item $\Gamma$ is the binomial defining $C_{2k}$
    \item $\Gamma = \bar{\gamma}$ for some $\gamma\in \mathcal{U}(I_G)$ which passes through $e_G$
\end{itemize}
\end{proposition}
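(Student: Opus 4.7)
My plan is to classify $\Gamma = m_1 - m_2 \in \mathcal{U}(I_{G * C_{2k}})$ by which edges appear in its support, in three exhaustive cases. If $\mathrm{supp}(\Gamma) \subseteq E(G)$, then $\Gamma \in I_G$, and any proper divisor $\gamma' \in I_G$ of $\Gamma$ would also lie in $I_{G * C_{2k}}$ and contradict the primitivity of $\Gamma$ there; hence $\Gamma \in \mathcal{U}(I_G)$. If instead $\mathrm{supp}(\Gamma) \subseteq E(C_{2k})$, then $\Gamma \in I_{C_{2k}}$, which is the principal ideal generated by the single cycle binomial $eU - V$, so primitivity forces $\Gamma$ to equal this generator.

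The main case is when $\mathrm{supp}(\Gamma)$ meets both $E(G)\setminus\{e\}$ and $E(C_{2k})\setminus\{e\}$. Here I would view $\Gamma$ as a closed even walk and decompose it into maximal sub-walks contained entirely in $G$ or entirely in $C_{2k}$, with transitions occurring only at the endpoints $u,v$ of $e$. The central structural claim is that every $C_{2k}$-sub-walk is either the single edge $e$ or a full traversal of the path $P = C_{2k}\setminus\{e\}$ between $u$ and $v$. To prove this I would rule out two alternatives: first, any retracing of edges within $P$ by a sub-walk would force some path edge to appear at positions of opposite parity in $\Gamma$, contributing to both $m_1$ and $m_2$ and contradicting $\gcd(m_1,m_2)=1$ for a primitive binomial; second, a $C_{2k}$-sub-walk that combines $e$ with edges of $P$ without retracing must traverse the entire cycle $C_{2k}$, forcing the cycle binomial $eU - V$ to divide $\Gamma$ non-trivially, again contradicting primitivity unless $\Gamma$ itself equals the cycle binomial (Case 2).

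Given this structure, I would define $\gamma$ as the binomial of the closed walk obtained from $\Gamma$ by replacing every full-path sub-walk with the single edge $e$. By construction $\gamma \in I_G$, the edge $e$ appears in $\gamma$ at least once (since $\Gamma$ uses some edge of $P$), and the extension $\bar{\gamma}$ defined before the proposition recovers $\Gamma$. To verify that $\gamma \in \mathcal{U}(I_G)$ I would argue by primitivity preservation once more: a hypothetical proper divisor $\gamma' \in I_G$ of $\gamma$ would lift, via the same extension construction applied to its $e$-occurrences, to a proper divisor of $\Gamma$ in $I_{G * C_{2k}}$, contradicting the primitivity of $\Gamma$. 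The main obstacle I anticipate is the structural claim in Case 3: establishing that each $C_{2k}$-sub-walk is either $e$ or the full path $P$, and then carefully matching the multiplicities of the path edges in $\Gamma$ against those predicted by $\bar{\gamma}$ so as to rule out hybrid walks that combine direct $e$-traversals with full path traversals.
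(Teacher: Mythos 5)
Your approach is genuinely different from the paper's. The paper opens by invoking the Tatakis--Thoma/DSNR structure theorem (a primitive closed even walk is either an even cycle or contains at least two odd cycles), uses bipartiteness of $C_{2k}$ to force all odd cycles of $\Gamma$ into $G$, and then leans on the proof of \cite[Theorem 3.11]{CDSRVT} to dispatch the even-cycle case before constructing $\gamma$ by replacing each traversal of $C_{2k}\setminus e$ with $e$. You instead split by the support of $\Gamma$, treat the ``support in $E(C_{2k})$'' case via principality of $I_{C_{2k}}$, and for the mixed case build your own decomposition of the walk at the endpoints of $e$, using the $\gcd(m_1,m_2)=1$ characterization of primitivity and the divisibility of the cycle binomial to pin down the structure of the $C_{2k}$-sub-walks. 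This buys you a more self-contained, elementary argument that does not route through the ``two odd cycles'' classification or through the cited CDSRVT proof. Your lifting argument for showing $\gamma$ is primitive in $G$ (a proper divisor of $\gamma$ would extend to a proper divisor of $\Gamma$) is the clean forward direction and is at least as transparent as the paper's argument, which is phrased as showing that $\bar\gamma$ is primitive given that $\gamma$ is.

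The place where your proposal is thin is exactly the place you flag: the structural claim in Case 3. Two sub-points need real work. (i) Your $\gcd$ argument for ruling out retracing inside $P$ needs care: an edge of $P$ occurring twice in the walk does not automatically land on opposite sides of the binomial (that is forced only for \emph{consecutive} occurrences), and you should invoke the fact that $P$ is a path --- so degree considerations force every edge of $P$ to be used an odd number of times in each $u$-to-$v$ sub-walk, and combined with the known bound that a primitive walk uses each edge at most twice, this forces each sub-walk to traverse $P$ exactly once. (ii) Ruling out the hybrid where $\Gamma$ uses $e$ once directly and $P$ once: you need to actually check that in this situation one orientation of the cycle binomial $eA-B$ divides $\Gamma$, regardless of the relative parities, which takes a short calculation. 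Neither is a fatal obstruction, but as written the proposal asserts rather than proves the central structural claim, so it would not yet pass as a complete proof.
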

\begin{proof}
We will abuse notation and write $G$ and $C_{2k}$ for the subgraphs of $G*C_{2k}$ used to construct the gluing. Suppose that some edge of $C_{2k}$ appears in a primitive closed even walk $\Gamma$ of $G*C_{2k}$. Then $\Gamma$ is either an even cycle or a primitive walk containing at least two odd cycles \cite[Theorem 1.7]{DSNR}. In the first case, we follow the argument of the proof of Theorem 3.11 in \cite{CDSRVT} to conclude that the even cycle is either $C_{2k}$ itself, is an even cycle of $G$, or has the form $\bar{\gamma}$ where $\gamma$ is an even cycle of $G$ which passes through $e$.

If $\Gamma$ includes at least two odd cycles and is not exclusively in $G$, then it must be of the form $\bar{\gamma}$ for some primitive walk $\gamma$ of $G$ which passes through $e$. Indeed, if there is a walk $\Gamma$ that includes the edges of $C_{2k}$, then it must pass through the endpoints of $e$, so let $\gamma$ be the walk where the sequence of edges of $C_{2k}\setminus e$ in $\Gamma$ are replaced by $e$. Let $u_2e-v_2$ be the binomial of $C_{2k}$ in $G*C_{2k}$, where $u_2,v_2$ are monomials with support in $E(C_{2k}\setminus e)$. There are now two cases to consider:

\noindent\underline{Case 1:} If $e$ appears exactly once in the walk $\gamma$ so that $\gamma=u_1e-v_1$ (where $u_1,v_1$ are monomials with support in $E(G\setminus e)$), then the proof of Theorem 3.11 in \cite{CDSRVT} shows that $\bar{\gamma} = u_1v_2 - v_1u_2$. To show that it is primitive, observe that any $u_3-v_3\in I_{G*C_{2k}}$ which doesn't pass through $e$ will either use all variables in $E(C_{2k}\setminus e)$, or will not use any of the variables of the cycle. Assume that $u_3|u_1v_2$ and $v_3|v_1u_2$. Then we can write $u_3 = c_3g_3$ and $v_3=c_4g_4$ where $c_3|v_2$, $c_4|u_2$, $g_3|u_1$ and $g_4|v_1$. Then either $\gamma$ is not primitive because of $g_3e-g_4$, or the binomial for  $C_{2k}$ is not primitive because of $c_4e-c_3$, which is a contradiction.

\noindent\underline{Case 2:} If $e$ appears twice in $\gamma$, then we can write $\gamma=u_1e^2-v_1$. As above, we can show that $\bar{\gamma}=u_1v_2^2 - v_1u_2^2$, by tracing out $\gamma$ in the following way. Let $e=\{a,b\}$. Start at vertex $a$, and trace out the portion of $\gamma$ that start at $a$, stays in $G\setminus e$, and returns to vertex $a$. Then cross through all edges in $C_{2k}\setminus e$ to get to vertex $b$. Then trace out the portion of $\gamma$ that starts at vertex $b$ and stays in $G\setminus e$, returning to vertex $b$. Finally, cross the edges of $C_{2k}\setminus e$ again to get back to $a$. Note that the intermediate vertices in a primitive walk can only be visited twice (since every cut vertex only belongs to two blocks by \cite[Theorem 2.2]{TT}). We can show that $\bar{\gamma}$ is primitive using a similar argument as above. 
\end{proof}

\subsubsection{Main Theorem}

Using the previously mentioned operations, we are now ready to show that arbitrarily large universal Gr\"obner bases can be produced in polynomial time. Starting with a small graph where $\mathcal{U}(I_G)$ can be computed directly, and through random applications of each operation, a sufficiently large (and asymmetric) graph $H$ with computable $\mathcal{U}(I_H)$ can be constructed to secure the system $\mathcal{P}$. By asymmetric, we mean that repetitive iterations of the same operation should be avoided (such as simply gluing on a 4-cycle successively).

\begin{theorem}\label{thm: construct U_I}
Let $G$ be a finite simple graph such that $\mathcal{U}(I_G)$ is known. Then by using one of the following operations

\begin{enumerate}
    \item Gluing a disjoint bipartite graph to $G$ along some $v\in V(G)$ (as in Section  \ref{subsubsec: vertex gluing})
    \item Gluing a disjoint even cycle to $G$ along some $e\in E(G)$ (as in Section \ref{subsubsec: cycle gluing})
    \item Star subdividing along a degree two vertex of $G$ (as in Proposition \ref{prop: star degree 2})
    \item Performing a star contraction along a vertex $v$ such that $G_v$ is a simple graph (as in Definition \ref{def: star contraction})
\end{enumerate}
we can produce a graph $G'$ such that the number of operations to compute $\mathcal{U}(I_{G'})$ is linear in $N=|\mathcal{U}(I_G)|$. Furthermore, by using any combination of $k$  operations $(1)$ to $(4)$, and choosing $k$ sufficiently large, we can produce a graph $H$ such that $|\mathcal{U}(I_H)|$ is as large as desired, with computational complexity $O(N^k)$.
\end{theorem}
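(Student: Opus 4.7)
The plan is to verify, operation by operation, that each of the four constructions admits an update of $\mathcal{U}(I_G)$ computable in time linear in $N = |\mathcal{U}(I_G)|$, and then to iterate the argument.

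First, for the bipartite vertex gluing (operation (1)), Proposition \ref{prop: glue bipartite} gives the decomposition $\mathcal{U}(I_{G\star B}) = \mathcal{U}(I_G)\sqcup \mathcal{U}(I_B)$, so the update is a list concatenation requiring $O(N)$ time (assuming $\mathcal{U}(I_B)$ has been precomputed for the small bipartite graph being attached). Second, for cycle gluing along $e$ (operation (2)), Proposition \ref{prop: glue even cycle} describes $\mathcal{U}(I_{G*C_{2k}})$ exactly as the union of $\mathcal{U}(I_G)$, the binomial of $C_{2k}$, and the extensions $\bar{\gamma}$ for each $\gamma\in\mathcal{U}(I_G)$ passing through $e_G$. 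Iterating through $\mathcal{U}(I_G)$, testing whether $e_G$ appears in each binomial, and producing $\bar{\gamma}$ via the substitution rule stated before Proposition \ref{prop: glue even cycle} takes $O(N)$ time. Third, for the degree-two star subdivision (operation (3)), Proposition \ref{prop: star degree 2} yields the equality $\mathcal{U}(I_G) = \psi_v(\mathcal{U}(I_{G_v}))$, and applying $\psi_v$ element-wise is $O(N)$. Fourth, for the star contraction (operation (4)), Lemma \ref{lem: star substitution} combined with the observation that $\pi_v(\mathcal{U}(I_G))$ is itself a (not necessarily reduced) universal Gr\"obner basis of $I_{G_v}$ allows us to simply record this image, again at cost $O(N)$.

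For the iterated claim, among the four operations only (1) and (2) can grow the basis: gluing a fixed small bipartite graph adds a bounded number of generators, and cycle gluing contributes at most one extension per existing element plus the cycle binomial, so each operation at most doubles the cardinality. Consequently, after $k$ operations the basis has size at most $2^k N$, and since each step is linear in the then-current cardinality, the cumulative work is $O(2^k N)$, which lies in $O(N^k)$ for fixed $k$ and $N\geq 2$. To see that $|\mathcal{U}(I_H)|$ can be made as large as desired, note that repeatedly gluing even cycles along edges appearing in at least one existing primitive binomial strictly increases the count at every step.

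The main obstacle has essentially already been dispatched by the structural propositions established earlier in this section; what remains is bookkeeping. The only residual subtlety lies in operation (2), where one must verify that each extension $\bar{\gamma}$ is produced in constant time per element, which is immediate from its closed-form description preceding Proposition \ref{prop: glue even cycle}, and in operation (4), where one trades a reduced basis for a (possibly redundant) universal one to preserve linear cost.
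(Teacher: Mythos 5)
Your proof follows the same structure as the paper's: operation by operation, you invoke Propositions \ref{prop: glue bipartite}, \ref{prop: glue even cycle}, \ref{prop: star degree 2}, and Lemma \ref{lem: star substitution} to show each update is $O(N)$, then iterate. In the iteration step you are actually slightly more careful than the paper — where the paper appeals loosely to ``the product of the complexity bounds'' to reach $O(N^k)$, you track the geometric growth of the basis (at most doubling per step) and obtain the tighter cumulative bound $O(2^kN)$, then correctly observe that this lies in $O(N^k)$ once $N\geq 2$; you also make explicit the observation, left implicit in the paper, that repeated cycle-gluings along edges appearing in existing primitive binomials drive $|\mathcal{U}(I_H)|$ arbitrarily large.
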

\begin{proof}
For the first three operations, the explicit method in which $\mathcal{U}(I_{G'})$ is obtained from the $\mathcal{U}(I_G)$ is described in Propositions \ref{prop: glue bipartite} and \ref{prop: glue even cycle} and also Proposition \ref{prop: star degree 2}. Here we would produce a larger list $\mathcal{U}(I_{G'})$ given $\mathcal{U}(I_G)$ for the first two operations, while the third operation would maintain the cardinality of the sets but increase the degree. 

The fourth operation maintains the same cardinality, although the new list of closed even walks may not all be primitive (this is still okay in the context of Gr\"obner bases since we are simply adding generators which may be unnecessary for the Gr\"obner computation). 

If $|\mathcal{U}(I_G)| = N$, then the first operation simply merges two sets, which is done in linear time. The second operation requires at most $N+1$ new elements to be added to the list of primitive closed even walks (one instance of $C_{2k}$, and at most one $\bar{\gamma}$ computation for each $\gamma\in \mathcal{U}(I_G)$), so has complexity $O(N)$. The third operation increases the degree of at most $N$ walks, which again has complexity $O(N)$. Finally, the star contraction requires a substitution of at most $N$ polynomials, which is again $O(N)$. Iterating these operations would result in the product of the complexity bounds, proving the $O(N^k)$ bound. 
\end{proof}

\section{Conclusions and Alternate Protocols} \label{sec: conclusion}

We conclude with some brief observations about the use of universal Gr\"obner bases for securing data. The protocol $\mathcal{P}$ is just one possible vision of how universal Gr\"obner bases could be used in cryptography. We hope that this article will spur interest in other possible uses of the construction of $\mathcal{U}_I$ proposed in Section \ref{sec: initial}, especially by those better versed with the practical issues concerning cryptographic implementations. 

We offer several remarks on alternate approaches:

\begin{itemize}
    \item Choosing $\mathbb{K}$ to be a finite field would increase the difficulty of the Gr\"obner computations and would likely improve the security of $\mathcal{P}$.
    \item When $\tau=\emptyset$, one shortfall of the system is the amount of time that Party $A$ needs to decrypt the message. This may make the $\tau=\emptyset$ protocol useful for blockchain applications where rewards are used to incentivize the completion of brute-force verifications. 
    \item Symmetric Diffie-Hellman type initializations of $\mathcal{P}$ may be possible by Party $A$ providing a common monomial ideal, followed by $A$ and $B$ each choosing their own initial ideals and combining it with this common ideal. Sending such ``combined" ideals (using unions, intersections, etc.) may reveal too much information about degree bounds of generators in the choices of $A$ and $B$. Masking the choices using hash functions would yield a similar security to the $\tau\neq \emptyset$ case.
    \item Party $B$ only sending partial information about $K_B$ would reduce the number of keys that $A$ needs to check, offering a middle ground between the $\tau=\emptyset$ and $\tau\neq \emptyset$ initializations of $\mathcal{P}$. Choosing $\mathcal{E}$ to be a symmetric encryption algorithm would also reduce $A$'s decryption time, since such schemes are usually less computationally intensive compared to their asymmetric counterparts.
\end{itemize}

As a final note, universal Gr\"obner bases for toric ideals have been better studied and are generally faster to compute. Using some  $\mathcal{U}_I$ associated to the toric ideal of a graph may introduce a weakness to the system if chosen poorly. Generally, the complexity of computing universal Gr\"obner bases for a toric ideal of a graph still remains exponential in the number of edges \cite[Section 4]{ST}. 

An alternate approach is to build a large enough $\mathcal{U}_I$ using the techniques in Section \ref{sec: initial}, and then add one (carefully selected) non-toric generator to the list, followed by a recomputation of  a universal Gr\"obner basis for the new list. If an attacker does not know the graph $G$, then the toric universal Gr\"obner basis algorithms from \cite{CLS,S} would be difficult to implement. Furthermore, even if $G$ were known, choosing it large enough would make those computations difficult.

\end{document}